\newcommand{\Rmnum}[1]{\expandafter\@slowromancap\romannumeral #1@}
\newtheorem{lemma}{Lemma}
\newtheorem{proposition}{Proposition}
\newenvironment{proof}[1][Proof]{\begin{trivlist}
		\item[\hskip \labelsep {\itshape #1}]}{\end{trivlist}}
\newcommand{\qed}{\nobreak \ifvmode \relax \else
	\ifdim\lastskip<1.5em \hskip-\lastskip
	\hskip1.5em plus0em minus0.5em \fi \nobreak
	\vrule height0.75em width0.5em depth0.25em\fi}
\begin{document}

\title{Edge Cache-assisted Secure Low-Latency Millimeter Wave Transmission}
\author{Wanming Hao,~\IEEEmembership{Member,~IEEE,} Ming Zeng, Gangcan Sun, and Pei Xiao,~\IEEEmembership{Senior Member,~IEEE}
		\thanks{W. Hao is with the School of Information Engineering, and the Henan Institute of Advanced Technology, Zhengzhou University, Zhengzhou 450001, China, and  also with the National Center for International Joint Research of Electronic Materials and Systems, Zhengzhou 450001, China, and also with  the 5G Innovation Center, Institute of Communication Systems, University of Surrey, Guildford GU2 7XH, U.K.  (Email: iewmhao@zzu.edu.cn).}
		\thanks{M. Zeng is with the Faculty of Science and Engineering, Laval University, Quebec, G1V0A6, Canada, and  also with the Faculty of Engineering and Applied Science, Memorial University, St. Johns, NL A1B 3X9, Canada (E-mail: mzeng@mun.ca).}
	\thanks{G. Sun is with the School of Information Engineering, and Institute of Industrial Technology, Zhengzhou University, Zhengzhou 450001, China  (E-mail: iegcsun@zzu.edu.cn).}
	\thanks{P. Xiao is with the 5G Innovation Center, Institute of Communication Systems, University of Surrey, Guildford GU2 7XH, U.K. (Email: p.xiao@surrey.ac.uk).}
	
}

%



\maketitle
\begin{abstract}
In this paper, we consider an edge cache-assisted millimeter wave cloud radio access network (C-RAN). Each remote radio head (RRH) in the C-RAN has a local cache, which can pre-fetch and store the files requested by the actuators. Multiple RRHs form a cluster to cooperatively serve the actuators, which acquire their required files either from the local caches or from the central processor via multicast fronthaul links. For such a scenario, we formulate a beamforming design problem to minimize the secure transmission delay under transmit power constraint of each RRH. Due to the difficulty of directly solving the formulated problem, we divide it into two independent ones: {\textit{i)}} minimizing the fronthaul transmission delay by jointly optimizing the transmit and receive beamforming;  {\textit{ii)}} minimizing the maximum access transmission delay by jointly designing cooperative beamforming among RRHs. An alternatively iterative algorithm is proposed to solve the first optimization problem. For the latter, we first design the analog beamforming based on the channel state information of the actuators. Then, with the aid of successive convex approximation and $S$-procedure techniques, a semidefinite program (SDP) is formulated, and  an iterative algorithm is proposed  through SDP relaxation. Finally, simulation results are provided to verify the performance of the proposed schemes. 
\end{abstract}

\begin{IEEEkeywords}
Edge cache, secure transmission delay, millimeter wave, multicast, beamforming.
\end{IEEEkeywords}

%
\IEEEpeerreviewmaketitle

\section{Introduction}
With rapidly growing service in ultra-high-definition video (UHDV), autonomous driving, and connected vehicles, internet of things (IoTs), etc., how to realize high-rate and low-latency transmissions  is of practical significance for future wireless communication networks~\cite{Peng_IoT_2019}. To this end, cloud radio  access network (C-RAN) has emerged as a promising enabling technology~\cite{Wang_TWC_2018,An_TCM_2017}. In C-RAN, a central baseband unit (BBU) is in charge of resource allocation and signal processing, while the low-cost and low-power remote radio heads (RRHs) are connected to the BBU via wireless or wired fronthaul links~\cite{Hao_IoT_2019}. The capacity of the fronthaul links becomes bottleneck for C-RAN, causing severe latency and degraded user experience, especially for applications such as virtual reality (VR), UHDV, and autonomous driving~\cite{Vu_TCOM_2018}. To tackle this problem, edge cache  has been developed to relieve the burden of the fronthaul links and lower the end-to-end latency~\cite{Yao_IoT_2019}. The main idea of edge cache is that RRHs should pre-fetch  the most frequently requested files from the cloud and store them in the RRHs' local caches during the off-peak traffic periods (such as midnight)~\cite{Le_TWC_2018}. When the files required  by the actuators (devices) are cached in the RRHs, they can be directly transmitted to the actuators from the RRHs, leading to reduced fronthaul link data traffic and decreased latency~\cite{He_TCOM_2019}. Therefore, the cache technique will play a pivotal role in future wireless networks~\cite{Hao_SJ_2019}. 

Edge cache-assisted C-RAN enables cooperation among RRHs to cancel the inter-RRH interference via coordinated multiple-point transmission (CoMP)~\cite{Tanno_TWC_2010}.  However, the CoMP approach requires cooperative RRHs to share the actuators' data, so that the fronthaul links have to carry each actuator's data multiple times. This limits the cooperative size and increases the fronthaul burden. To address this problem, the multicast technique can be used to avoid multiple tranmission by multicasting the acutator's message to all cooperative RRHs at the same time. Dai~{\it{et al.}}~\cite{Yu_JSAC_2018} proposed to use the wireless multicast in a cache-assisted C-RAN, where the cooperative RRHs pre-store the popular contents via multicast fronthaul links. Hu~{\it{et al.}}~\cite{Hu_TWC_2017} also adopted the multicast beamforming approach over fronthaul links to deliver the actuator's data to a group of RRHs. 

Due to the broadcast nature of wireless transmission, confidential messages may be eavesdropped by malicious attackers, jeopardizing the secrecy of the information  transmission~\cite{Ng_TVT_2016,Chen_TVT_2018}. Traditionally, the security issues at the wireless communication have been handled at the higher layer by using encryption approaches. However, the huge growth in the number of wireless devices and the rapid development of computing technologies have surfaced the vulnerability of the conventional encryption methods~\cite{Yang_TCM_2015}. Thus, the physical layer security (PLS) has been developed as a complementary approach to the conventional encryption methods for securing confidential information~\cite{Wyner_1975}. Another important reason is that for low-end, low-energy IoT devices with limited battery life and computational capabilities, most of the available energy and resources should be dedicated to core application functionalities, and there may be little left for supporting security. PLS techniques allow light-weight encryption at the higher layer while ensuring the secrecy of transmissions. The key idea of PLS is to use the randomness of the wireless channels to refrain the illegitimate side from wiretapping the users' information~\cite{Chu_TWC_2016}.  In this paper, we study the edge cache-assisted secure low-latency transmissions. Due to its ultra wide bandwith, millimeter wave (mmWave) is applied to the fronthaul and access links~\cite{Hao_TWC_2018}.   Specifically, for a given cache strategy, the central processor (CP) first delivers the required non-cached files to cooperative RRHs via the mmWave multicast fronthaul links, and then the RRHs jointly transmit the overall files to the actuators. The system objective  is to design the beamforming of the CP and the RRHs to minimize the secure transmission delay.

\subsection{Related Works}
Edge caching has attracted increasing attention recently. Some woks focused on the delivery strategies~\cite{Tao_TWC_2016,Fu_WCL_2019,He_JSAC_2019,Liu_JSAC_2016} and others studied the cache placement problems~\cite{Liu_TWC_2017,Zheng_CL_2017,Zhu_TWC_2018,Cui_TCOM_2018}. Specifically,  in~\cite{Tao_TWC_2016}, Tao~{\it{et al.}} proposed to form a multicast group for users requesting the same content,  so that  these users can be jointly served by the same group of RRHs. Under a given cache strategy, the authors investigated the dynamic RRH clustering and multicast beamforming to minimize the weighted sum of backhaul cost and transmit power. Fu~{\it{et al.}}~\cite{Fu_WCL_2019} studied the power control problem for non-orthogonal multiple access (NOMA) transmissions in wireless cache networks. The authors proposed a deep neural network-based method to minimize the transmission delay. In a cache-enabled multigroup multicasting network, He~{\it{et al.}}~\cite{He_JSAC_2019} designed three transmission schemes to minimize the delivery latency.  In~\cite{Liu_JSAC_2016}, Liu~{\it{et al.}} explored the potential of energy efficiency (EE) of the cache-enabled networks, and identified the optimal cache capacity that maximizes the EE. Furthermore, the authors analyzed the obtained EE gain brought by cache. Based on the flexible physical-layer transmission and the diverse requirements of different users,  Liu~{\it{et al.}}~\cite{Liu_TWC_2017} studied the cache placement problem, and proposed centralized and distributed cache strategies to minimize the download delay.  In~\cite{Zheng_CL_2017}, Zheng~{\it{et al.}} applied the cache technique to a distributed relay system, and proposed a hybrid cache scheme to minimize the outage probability.  Zhu~{\it{et al.}}~\cite{Zhu_TWC_2018} investigated the performance of cache-enabled ultra-dense small cell networks, and derived the successful content delivery probability (SCDP). The authors proposed two algorithms, namely constrained cross-entropy algorithm and heuristic probabilistic content placement algorithm to minimize the SCDP.  Under random cache at the RRHs, Cui~{\it{et al.}}~\cite{Cui_TCOM_2018} proposed two cooperative transmission schemes by jointly considering RRH cache and cooperation. The authors derived the expression of the successful transmission probability based on each scheme. 

Although the cache problem has been investigated  in~\cite{Tao_TWC_2016,Fu_WCL_2019,He_JSAC_2019,Liu_JSAC_2016,Liu_TWC_2017,Zheng_CL_2017,Zhu_TWC_2018,Cui_TCOM_2018}, the secure transmission aspect was not considered.
In~\cite{Ng_TWC_2018}, Derrick~{\it{et al.}} designed a cache scheme that effectively enhances the PLS for a backhaul-limited cellular network. However, since the authors' focus was to minimize the transmit power subject to the secrecy rate, the secure transmission delay  has not been considered.  In~\cite{Xu_WCL_2019}, Xu~{\it{et al.}} investigated the PLS in a mobile edge computing network, and formulated a weighted sum energy minimization problem under a given secure transmission delay. Wang~{\it{et al.}}~\cite{Wang_TCOM_2018} studied the security transmission problem in a cache-assisted heterogeneous network. To realize the secure and energy-efficient transmissions, a joint cache placement and file delivery scheme was proposed. In~\cite{Cheng_TCOM_2019}, Cheng~{\it{et al.}} considered a cache-assisted unmanned aerial vehicle (UAV) system, and developed a joint optimization strategy via designing UAV trajectory and time scheduling to improve the secure transmission.  Kiskani~{\it{et al.}}~\cite{Kiskani_TVT_2017} investigated the secure approach in an Ad Hoc network with cache. The authors proposed a novel decentralized secure coded caching scheme to enhance the secure storage, where the nodes only transmit the coded file for protecting the user information. Most aforementioned  works mainly focus on the cache placement design to enhance the PLS without considering the secure transmission delay. 

\begin{figure}[t]
	\begin{center}
		\includegraphics[width=6.5cm,height=4.5cm]{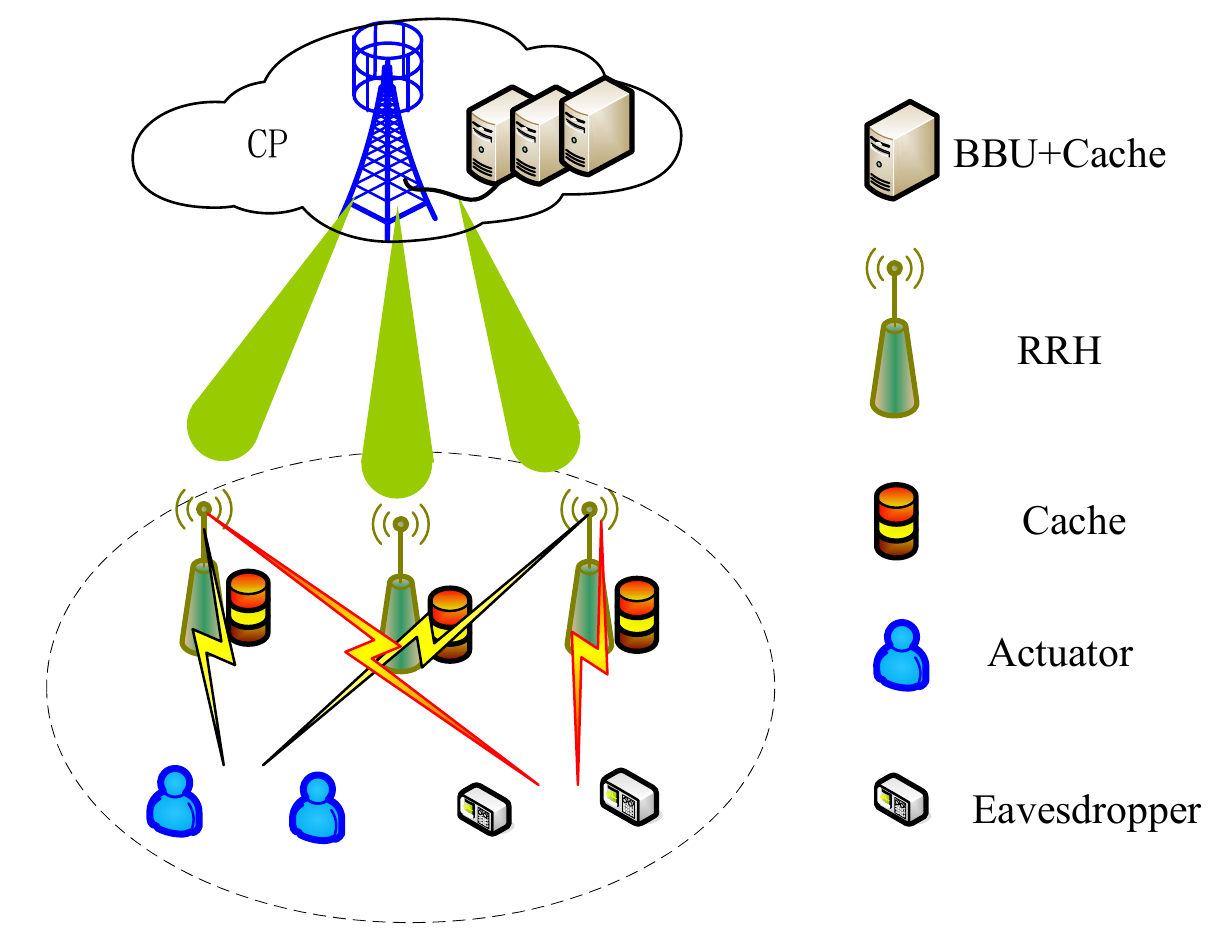}
		\caption{System model of the edge cache-assisted C-RAN with multicast fronthaul.}
		\label{systemfigure1}
	\end{center}
\end{figure}

\subsection{Main Contributions} 
In this paper, we investigate the secure transmission delay minimization problem in an edge cache-assisted mmWave C-RAN, where each RRH is equipped with a local cache. To reduce the hardware cost and energy consumption, we consider the single radio frequency (RF) chain and multiple antennas structure at the CP and RRHs. In addition, the CoMP technique is adopted among the RRHs. Therefore, the main challenge is how to jointly design the transmit and receive beamforming at the first phase as well as the cooperative beamforming for secure transmission at the second phase,  and the main contributions of this paper include:
\begin{itemize}
\item We develop a two-phase transmission frame structure. At the first phase, the uncached files are fetched from the CP to RRHs through the multicast fronthaul link. At the second phase, all required files are transmitted from the  RRHs to the actuators via the CoMP technique. In this regard, we formulate a secure transmission delay minimization problem by jointly optimizing transmit and receive beamforming at the first phase as well as the cooperative beamforming among RRHs at the second phase. 
\item Since the original problem is intractable, we divide it into two independent optimization problems, and minimize the transmission delay for each problem. For the former, we need to jointly optimize the transmit beamforming at the CP and the receive beamforming at the RRHs, and an alternatively iterative algorithm is proposed.
\item For the second phase, we minimize the maximum secure transmission delay to guarantee fairness. Moreover, the general scenario  with imperfect CSIs for eavesdroppers (Eves) links are assumed. We first design the analog beamforming for each RRH. Then, by the successive convex approximation (SCA) and $S$-procedure techniques, the problem is transformed into a  semidefinite programm (SDP), which can be recast into a convex one by dropping the rank-one constraint. Meanwhile, our test results show that semidefinite relaxation (SDR) gives rank-one solutions with nearly 99\% probability.
\end{itemize}

The rest of this paper is organized as follows. The system model and problem formulation are presented in Section~II. In Section III, the solution to the formulated secure transmission delay problem is provided. Simulation results are presented in Section~IV. Finally, conclusions are drawn  in Section~V.

\textit{Notations}: We use the following notations throughout this paper: $(\cdot)^\ast$, 
$(\cdot)^T$ and $(\cdot)^H$ denote the conjugate, transpose and Hermitian transpose, respectively, $\|\cdot\|$ is the Euclidean norm, ${\mathbb{C}}^{x\times y}$ means the space of $x\times y$ complex matrix, {Re($\cdot$)} and {Tr($\cdot$)} denote real number operation and trace operation, respectively. $[\cdot]^+$ denotes the $\max\{0,\cdot\}$, and Diag(${\bf{f}}_1,\ldots,{\bf{f}}_M$) is a diagonal matrix.

\section{System Model and Problem Formulation}
In this section, we first describe the  studied system model, and propose a two-phase frame structure for the file transmission. Next, we define the secure transmission delay and formulate a transmission delay minimization problem.
\subsection{System Model}

\begin{figure}[t]
	\begin{center}
		\includegraphics[width=8cm,height=4cm]{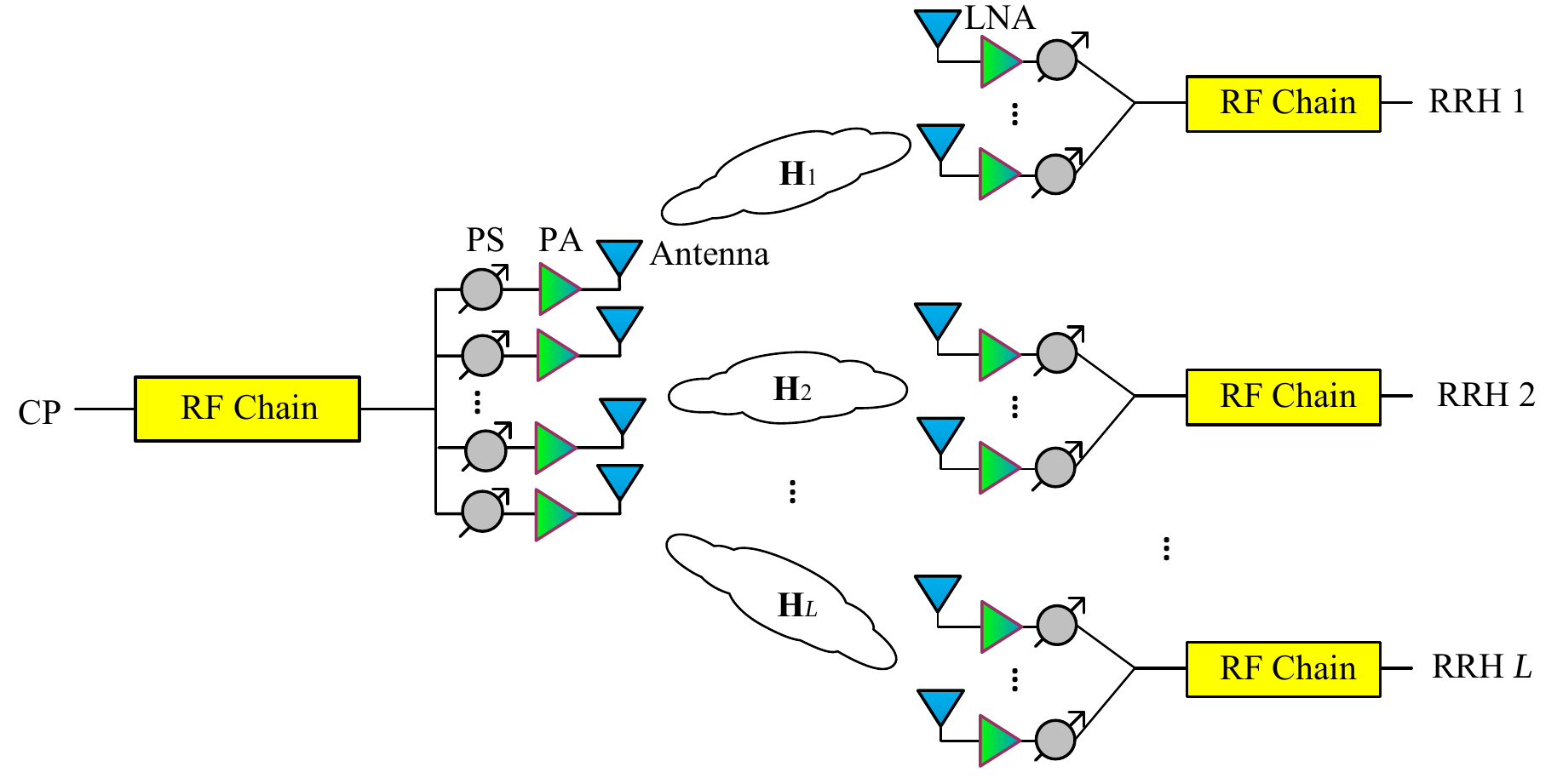}
		\caption{Single RF chain and multiple antenna structure for the CP and RRHs.}
		\label{systemfigure2}
	\end{center}
\end{figure}

\begin{figure}[t]
	\begin{center}
		\includegraphics[width=8cm,height=3.2cm]{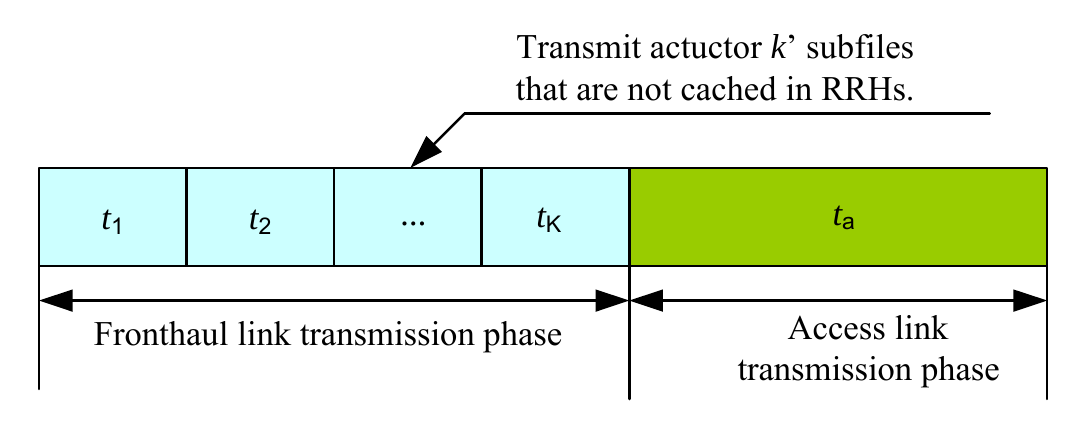}
		\caption{The transmission frame structure.}
		\label{systemfigure3}
	\end{center}
\end{figure}

\begin{table}[t]
	\caption{{List of Key Notations.}}
	\begin{center}
		\begin{tabular}{|c|c|}
			\hline
			$L$, $\mathcal{L}$ & Number and set of RRHs \\\hline
		    $K$, $\mathcal{K}$ & Number and set of actuators \\\hline
		    $S$, $\mathcal{S}$ & Number and set of Eves \\\hline
		    $M$, $\mathcal{M}$ & Number and set of antennas at the CP \\\hline
		    $N$, $\mathcal{N}$ & Number and set of antennas at each RRH \\\hline
		    $F$, $\mathcal{F}$ & Number and set of files \\\hline
		    $U$, $\mathcal{U}$ & Number and set of segments \\\hline
		    $L$, $\mathcal{L}$ & Number and set of RRHs \\\hline
		    $b_{f,u}$ & The cache state of segment ($f,u$) \\\hline
		    $c_{k,f}$ & If actuator $k$ requires file $f$ \\\hline
		    ${\bf{H}}_l$& Downlink channel from the CP to RRH $l$\\\hline
		    ${\bf{g}}_k$& Downlink channel from  $L$ RRHs to actuator $k$\\\hline
		    ${\bf{g}}_s^e$& Downlink channel from  $L$ RRHs to eavesdropper $s$\\\hline
		    ${\bf{w}}$& Transmit beamforming vector of the CP\\\hline
		    ${\bf{q}}_l$& Receive beamforming vector of RRH $l$\\\hline
		    ${\bf{z}}_l$& Transmit beamforming vector of RRH $l$\\\hline
		    ${\bf{v}}$& Cooperative digital precoding vector for $L$ RRHs\\\hline
		     $P$& Transmit power of the CP\\\hline
		     $P_{\rm{max}}^l$& Maximum transmit power of RRH $l$\\\hline
		    $t_k$& Delivery delay for transmitting actuator $k$' file from the CP\\\hline
		    
		\end{tabular}
	\end{center}
\end{table} 

As shown in Fig.~\ref{systemfigure1}, we consider the downlink transmission of a cache-enabled C-RAN, which includes one CP, $L$ RRHs, $K$ actuators. Meanwhile, there are $S$ Eves that may intercept the actuators' information. The same frequency mmWave  carrier is adopted at the fronthaul links from the CP to the RRHs and the access links from the RRHs to the actuators.  To reduce the energy consumption and hardware cost, the CP and RRHs are all equipped with a single RF chain, which is connected to multiple antennas via phase shifters (PSs) and power amplifier (PA) or low noise amplifier (LNA), as illustrated in Fig.~\ref{systemfigure2}. Denote the antenna number at the CP and each RRH by $M$ and $N$, respectively. In the considered system, each RRH $l\in\mathcal{L}=\{1,\dots,L\}$ is equipped with a finite-size cache, and can pre-fetch files from the CP during the off-peak period based on the content popularity and predefined cache strategies~\cite{Liu_TWC_2017,Hu_TVT_2019}. In addition, we assume that the BBU stores all the files requested by the actuators in its library. Denote the file set by $\mathcal{F}=\{1,\dots,F\}$, and each actuator requests only a file in a given time interval. In addition, each file $f$ can be split into $U$ segments with equal size, and each segment $(f,u)\;u\in\{1,\dots,U\}$ can be independently cached at the RRHs.  

In this paper, the CoMP technique is adopted, where multiple RRHs form one cluster and cooperatively serve the actuators. Here, we only consider one cluster to facilitate the analysis,  but our proposed scheme can be readily extended to multiple clusters. Similar to~\cite{He_JSAC_2019,Ng_TWC_2018}, we denote the cache status of segment $(f,u)$ in the cooperative RRH cluster by $b_{f,u}$, which can be expressed as 
\begin{eqnarray}\label{cache1}
b_{f,u}=\left\{
\begin{aligned}
&1,\;\; {\rm{if\;segment}}\; (f,u)\;{\rm{is\;cached\;in\;RRHs}},\\
&0,\;\;{\rm{otherwise}}.
\end{aligned}
\right.
\end{eqnarray}
Here, we propose a transmission frame structure as shown in Fig.~\ref{systemfigure3}, which includes two transmission phases: the fronthaul transmission phase and the access transmission phase.  During the fronthaul transmission phase, the CP delivers the uncached files requested by the actuators to RRHs via multicast, where $t_k$ denotes the delivery time for transmitting actuator $k$'s file. When all required files are fetched from the CP, $L$ RRHs cooperatively serve $K$ actuators during the access transmission phase.  

During the fronthaul link transmission phase,  the received signal by the RRH $l$ can be expressed as
\begin{eqnarray}\label{Fronth1}
	y_l={\bf{q}}_l{\bf{H}}_l{\bf{w}}\sqrt{P}x+{\bf{q}}_l{\bf{n}}_l,
\end{eqnarray}
where ${\bf{H}}_l\in\mathbb{C}^{N\times M}$ denotes the downlink mmWave channel matrix from the CP to RRH $l$, ${\bf{q}}_l\in\mathbb{C}^{1\times N}$ and ${\bf{w}}\in\mathbb{C}^{M\times 1}$ denote the transmit beamforming vector of the CP and the receive beamforming vector of RRH $l$, respectively;  $x$ denotes the multicast signal, satisfying $\mathbb{E}\{|x|^2\}=1$, and $P$ denotes the transmit power of the CP;  ${\bf{n}}_l$ is the independent and identically distributed (i.i.d.) additive white Gaussian noise (AWGN) vector, where each entry follows  $\mathcal{CN}(0,\delta^2)$. Due to the constant amplitude modulation for the transmit and receive beamforming, we have $|[{\bf{q}}_l]_n|=1/\sqrt{N}\;(n\in\mathcal{N})$ and  $|[{\bf{w}}]_m|=1/\sqrt{M}\;(m\in\mathcal{M})$, where $\mathcal{N}=\{1,\cdots,N\}$ and $\mathcal{M}=\{1,\cdots,M\}$~\cite{Hao_Access_2018}.

For the  mmWave channel, we adopt the widely used limited-scattering channel model with a uniform linear array~\cite{Xiao_TWC_2016}. Each scatter is assumed to contribute to a single propagation path. The mmWave channel can be expressed as
\begin{eqnarray}\label{channelmodel}
{\bf{H}}_l=\sum\nolimits_{g=1}^{G}\alpha_{l,g}{\bf{a}}_{\rm{RRH}}(\theta_{l,g}){\bf{a}}^H_{\rm{CP}}(\phi_{l,g}),
\end{eqnarray}
where $G$ is the number of multiple paths, $\alpha_{l,g}$ represents the complex gain of the path $g$ from the CP to RRH $l$, $\theta_{l,g}$ and $\phi_{l,g}$ are the angle of arrival  (AoA) and angle of departure (AoD) of path $g$, respectively.  When the half-wavelength antenna space is adopted,  the steering vectors   ${\bf{a}}_{\rm{RRH}}(\theta_{l,g})$ and ${\bf{a}}_{\rm{CP}}(\phi_{l,g})$ can be expressed as
	\begin{subequations}
		\begin{align}
 {\bf{a}}_{\rm{RRH}}(\theta_{l,g})=[1,e^{j\pi 1\theta_{l,g}},e^{j\pi 2\theta_{l,g}},\dots, e^{j\pi (N-1)\theta_{l,g}}]^T,\\
 {\bf{a}}_{\rm{CP}}(\phi_{l,g})=[1,e^{j\pi 1\phi_{l,g}},e^{j\pi 2\phi_{l,g}},\dots, e^{j\pi (M-1)\phi_{l,g}}]^T.
 \end{align}
\end{subequations} 
Based on~(\ref{Fronth1}), the achievable rate of RRH $l$ can be written as
\begin{eqnarray}\label{MulticastRate}
	R_l({\bf{q}}_l, {\bf{w}})=\log\left(1+{P|{\bf{q}}_l{\bf{H}}_l{\bf{w}}|^2}/{\sigma_l^2}\right).
\end{eqnarray} 

It is clear that the achievable multicast fronthaul rate is limited by the RRH with the worst channel condition, and is given by
\begin{eqnarray}
  R_1({\bf{Q}}, {\bf{w}})=\underset{l\in{\mathcal{L}}}{\min}\;\left\{R_l({\bf{q}}_l, {\bf{w}})\right\},
\end{eqnarray}  
where ${\bf{Q}}=[{\bf{q}}_1^T,\dots,{\bf{q}}_L^T]^T$. Accordingly, the fronthaul transmission delay can be calculated as
\begin{eqnarray}\label{FronDelay}
	T=\sum\nolimits_{k=1}^{K}t_k=\sum\nolimits_{k=1}^{K}\frac{\Psi_k}{R_1({\bf{Q}}, {\bf{w}})},
\end{eqnarray} 
where $\Psi_k=\sum_{f=1}^{F}c_{k,f}\frac{\Omega_f}{U}\sum_{u=1}^{U}(1-b_{f,u})$, $\Omega_f$ denotes the size of file~$l$, while $c_{k,f}$ is a binary variable, satisfying $\sum_{f=1}^F c_{k,f}=1$. Note that $c_{k,f}=1$ when actuator $k$ requires file $f$, and vice versa.

During the access transmission phase, $L$ RRHs cooperatively serve all actuators, and we denote the downlink channel vector from RHHs to actuator $k$ as ${\bf{g}}_k=[{\bf{g}}_k^1,{\bf{g}}_k^2,\dots,{\bf{g}}_k^L]$, where ${\bf{g}}_k^l\in{\mathbb{C}}^{1\times N}$ represents the donwnlink channel vector from RRH $l$ to actuator $k$. Denote the signal for actuator $k$ as $x_k$, satisfying $\mathbb{E}{|x_k|^2}=1$. The received signal by the actuator $k$ can thus be expressed as
\begin{eqnarray}
	y_k={\bf{g}}_k{\bf{Z}}{\bf{v}}_kx_k+\sum\nolimits_{i\neq k}^{K}{\bf{g}}_k{\bf{Z}}{\bf{v}}_ix_i+n_k,
\end{eqnarray}
where ${\bf{v}}_k\in {\mathbb{C}^{L\times 1}}$ is the digital beamformer for actuator $k$. ${\bf{Z}}\in\mathbb{C}^{NL\times L}$ denotes the   analog beamforming matrix, and is given as Diag(${\bf{z}}_{1},\ldots,{\bf{z}}_{L}$),
where ${\bf{z}}_{l}\in\mathbb{C}^{N\times 1}$ denotes the analog beamforming vector of RRH~$l$. The achievable rate of actuator $k$ can be written as 
\begin{eqnarray}\label{rate1}
   R_{2,k}({\bf{Z}},{\bf{V}})=\log\left(1+\frac{|{\bf{g}}_k{\bf{Z}}{\bf{v}}_k|^2}{\sum_{i\neq k}^K|{\bf{g}}_k{\bf{Z}}{\bf{v}}_i|^2+\sigma_k^2}\right),
\end{eqnarray} 
where ${\bf{V}}=[{\bf{v}}_1,\dots,{\bf{v}}_K]$.

We consider the scenario when the channel fading between the CP and the actuators or Eves is so large that the actuators and Eves cannot receive the information from the CP.  Therefore, the Eves only attempt to intercept the information of the actuators from the RRHs, and the received signal at Eve $s$ can be expressed as
\begin{eqnarray}
y_s={\bf{g}}_s^e{\bf{Z}}{\bf{v}}_kx_k+\sum\nolimits_{i\neq k}^{K}{\bf{g}}_s^e{\bf{Z}}{\bf{v}}_ix_i+n_s,
\end{eqnarray}
where ${\bf{g}}_s^e\in{\mathbb{C}}^{1\times NL}$ denotes the channel vector from the RRHs to Eve $s$. The mmWave channels ${\bf{g}}_k$ and ${\bf{g}}_s^e$ have the similar structure to~(\ref{channelmodel}), and thus, the detailed expressions are omitted here.  The intercepted rate of Eve $s$ on actuator $k$'s signal can be written~as
\begin{eqnarray}\label{rate2}
R_{2,k}^s({\bf{Z}},{\bf{V}})=\log\left(1+\frac{|{\bf{g}}_s^e{\bf{Z}}{\bf{v}}_k|^2}{\sum_{i\neq k}^K|{\bf{g}}_s^e{\bf{Z}}{\bf{v}}_i|^2+\sigma_s^2}\right).
\end{eqnarray} 

Finally, the achievable secrecy rate by  actuator $k$ can be expressed as~\cite{Zeng_JSTSP_2019,Chu_TVT_2016}
\begin{eqnarray}\label{securerate}
	\hat{R}_{2,k}({\bf{Z}},{\bf{V}})=\left[R_{2,k}({\bf{Z}},{\bf{V}})-\underset{s\in{\mathcal{S}}}{\max}\{R_{2,k}^s({\bf{Z}},{\bf{V}})\}\right]^+,
\end{eqnarray}
where $\mathcal{S}=\{1,2,\dots,S\}$ denotes the Eve set. Next, we define the secure transmission delay for actuator $k$ as
\begin{eqnarray}\label{SecureDelay}
	T_k=\frac{\hat{\Psi}_k}{\hat{R}_{2,k}({\bf{Z}},{\bf{V}})}.
\end{eqnarray}
where $\hat{\Psi}_k=\sum_{f=1}^{F}c_{k,f}{\Omega_f}$.
\subsection{Problem Formulation}
In general,  cache status $b_{f,u}$ or cache placement depends on several factors, such as user behavior, information popularity distribution and so on, and it can be decided by several advanced cache schemes, e.g., artificial intelligence-based multi-timescale framework method~\cite{Hu_TVT_2019} and deep $Q$-learning method~\cite{Hu_TVT_2018}. Here, we assume that the cache status  $b_{f,u}$ has been fixed according to a certain cache strategy, and the required files by the actuators, i.e., $c_{k,f}$ are also given in advance~\cite{Tao_TWC_2016,He_JSAC_2019}. In this work, we mainly focus on the beamforming design to minimize the two-phase transmission delay. 

To ensure system fairness, we aim to minimize the maximum secure transmission delay,  and formulate the following optimization problem:
\begin{subequations}\label{OptA}
	\begin{align}
	\;\;\;\;\;\;&\underset{\left\{{\bf{Q}}, {\bf{{w}}}, {\bf{Z}}, {\bf{V}}\right\}}{\rm{min}}\;\;\underset{k\in{\mathcal{K}}}{\rm{max}}\;{T_k}+\sum\nolimits_{k=1}^{K}t_k\;\;\label{OptA0}\\
	{\rm{s.t.}}\;\; &\sum\nolimits_{k=1}^{K}|{\bf{v}}_k(l)|^2\leq P_{\rm{max}}^l, l\in \mathcal{L},\label{OptA1}\\
	&|[{\bf{q}}_l]_n|={1}/{\sqrt{N}},\;\;|[{\bf{z}}_{l}]_n|={1}/{\sqrt{N}}, n\in \mathcal{N}, l\in \mathcal{L},\label{OptA2}\\
	&|[{\bf{w}}]_m|={1}/{\sqrt{M}},m\in \mathcal{M},\label{OptA3}
	\end{align}
\end{subequations}
where (\ref{OptA1}) denotes per-RRH transmit power constraint.  To handle problem (\ref{OptA}), we need to design the analog beamformers for both the CP and the RRHs at the fronthaul transmission phase as well as the hybrid analog/digital beamformer of cooperative RRHs at the access transmission phase, which is an intractable problem. 
\section{Problem Solution}  
Since the two transmission phases are relatively independent, we can equivalently divide the original problem into two parts, i.e., the transmit and receive beamforming design problem at the fronthaul transmission phase (i.e., ${\mathbb{P}1}$) as well as the hybrid analog/digital beamforming design problem at the access transmission phase (i.e., ${\mathbb{P}2}$), namely:
\begin{subequations}\label{OptB}
	\begin{align}
	{\mathbb{P}1}\;\;\;\;\;\;\;\;\;\;&\underset{\left\{{\bf{Q}}, {\bf{{w}}}\right\}}{\rm{min}}\;\;\sum\nolimits_{k=1}^{K}t_k\;\;\label{OptB0}\\
	{\rm{s.t.}}\;\; &|[{\bf{q}}_l]_n|={1}/{\sqrt{N}}, n\in \mathcal{N}, l\in \mathcal{L},\label{OptB2}\\
	&|[{\bf{w}}]_m|={1}/{\sqrt{M}},m\in \mathcal{M}.\label{OptB3}
	\end{align}
\end{subequations}
\begin{subequations}\label{OptC}
	\setlength{\abovedisplayskip}{-10pt}
	\begin{align}
	{\mathbb{P}2}\;\;\;\;\;\;\;\;\;\;&\underset{\left\{{\bf{Z}}, {\bf{V}}\right\}}{\rm{min}}\;\;\underset{k\in {\mathcal{K}}}{\rm{max}}\;{T_k}\;\;\label{OptC0}\\
	{\rm{s.t.}}\;\; &\sum\nolimits_{k=1}^{K}|{\bf{v}}_k(l)|^2\leq P_{\rm{max}}^l, l\in \mathcal{L},\label{OptC1}\\
	&|[{\bf{z}}_{l}]_n|={1}/{\sqrt{N}}, n\in \mathcal{N}, l\in \mathcal{L}.\label{OptC2}
	\end{align}
\end{subequations} 
\subsection{Beamforming Design at the Fronthaul Link}
Due to  $\sum_{k=1}^{K}t_k=\frac{1}{{R_1({\bf{Q}}, {\bf{w}})}}\sum_{k=1}^{K}\Psi_k$, we can rewrite ${\mathbb{P}1}$ as the following max-min rate problem
\begin{subequations}\label{OptD}
	\begin{align}
	\;\;\;\;\;\;\;\;\;\;\;\;\;\;\;\;\;\;\;\;&\underset{\left\{{\bf{Q}}, {\bf{{w}}}\right\}}{\rm{max}}\;\;\underset{l\in{\mathcal{L}}}{\min}\;R_l({\bf{q}}_l, {\bf{w}})\;\;\label{OptD0}\\
	{\rm{s.t.}}\;\; &{\rm{(\ref{OptB2}),(\ref{OptB3})}}.\label{OptD1}
	\end{align}
\end{subequations}
Based on~(\ref{MulticastRate}), (\ref{OptD}) can be equivalently written as follows:
\begin{subequations}\label{OptDD}
	\begin{align}
	\;\;\;\;\;\;\;\;\;\;\;\;\;\;\;\;\;\;\;\;&\underset{\left\{{\bf{Q}}, {\bf{{w}}}\right\}}{\rm{max}}\;\;\underset{l\in{\mathcal{L}}}{\min}\;|{\bf{q}}_l{\bf{H}}_l{\bf{w}}|^2\;\;\label{OptDD0}\\
	{\rm{s.t.}}\;\; &{\rm{(\ref{OptB2}),(\ref{OptB3})}}.\label{OptDD1}
	\end{align}
\end{subequations}

(\ref{OptDD}) is still difficult to handle due to the non-smooth objective function and the non-convex constraints. Moreover, the multiplication among the optimization variables makes it even more intractable.  To this end, we first initialize the receive beamformer  $\hat{\bf{Q}}=[\hat{\bf{q}}_1^T,\dots,\hat{\bf{q}}_L^T]^T$. By introducing the auxiliary variable $\eta$ and relaxing the constraint (\ref{OptB2}), (\ref{OptDD}) can be reformulated as the following tractable optimization problem
\begin{subequations}\label{OptDE}
	\setlength{\abovedisplayskip}{-6pt}
	\begin{align}
	\;\;\;\;\;\;\;\;\;\;\;\;\;\;\;\;\;\;\;\;&\underset{\left\{{\bf{{w}}},\eta\right\}}{\rm{max}}\;\;\eta\;\;\label{OptDE0}\\
	{\rm{s.t.}}\;\; &|\hat{\bf{h}}_l{\bf{w}}|^2\geq \eta, m\in \mathcal{M}, \label{OptDE1}\\
	&|[{\bf{w}}]_m|\leq {1}/{\sqrt{M}},m\in \mathcal{M}, \label{OptDE2}
	\end{align}
\end{subequations}  
where $\hat{\bf{h}}_l=\hat{\bf{q}}_l{\bf{H}}_l$.
\begin{algorithm}[t]
	\setlength{\belowdisplayskip}{-6pt}
	{\caption{The Proposed Alternatively Iterative Algorithm.}
		\label{algorithm1}
		{\bf{Initialize}} $\hat{\bf{q}}_l$ for any $l$.\\
		\Repeat{{\rm{Convergence}}}{
			Compute transmit beamformer  ${\bf{w}}^\star$ according to~(\ref{tr});\\
			Compute receive beamformer  ${\bf{q}}_l^\star$ according to~(\ref{re});\\
			Update $\hat{\bf{q}}_l={\bf{q}}_l^\star$ for any $l$;
		}
	}
\end{algorithm} 
One can observe that (\ref{OptDE1}) is the only non-convex constraint. According to the first-order Taylor approximation formula, $|\hat{\bf{h}}_l{\bf{w}}|^2$ can be approximated as
\begin{eqnarray}\label{Taylor}
	|\hat{\bf{h}}_l{\bf{w}}|^2\approx \hat{\bf{w}}^H\hat{\bf{H}}_l\hat{\bf{w}}+2{\rm{Re}}\{\hat{\bf{w}}^H\hat{\bf{H}}_l({\bf{w}}-\hat{\bf{w}})\},
\end{eqnarray}
where $\hat{\bf{H}}_l=\hat{\bf{h}}_l^H\hat{\bf{h}}_l$, and $\hat{\bf{w}}$ denotes the initial transmit beamformer. Finally, we formulate the convex optimization problem~as 
\begin{subequations}\label{OptDG}
	\setlength{\abovedisplayskip}{-6pt}
	\begin{align}
	\;\;\;\;\;&\underset{\left\{{\bf{{w}}},\eta\right\}}{\rm{max}}\;\;\eta\;\;\label{OptDG0}\\
	{\rm{s.t.}}\;\; &\hat{\bf{w}}^H\hat{\bf{H}}_l\hat{\bf{w}}+2{\rm{Re}}\{\hat{\bf{w}}^H\hat{\bf{H}}_l({\bf{w}}-\hat{\bf{w}})\}\geq \eta, m\in \mathcal{M}, \label{OptDG1}\\
	&|[{\bf{w}}]_m|\leq {1}/{\sqrt{M}},m\in \mathcal{M}. \label{OptDG2}
	\end{align}
\end{subequations}  
The above problem can be solved using the standard convex optimization techniques, e.g., interior-point method. However, the obtained solutions may not satisfy the constraint (\ref{OptB3}). To address this issue, we normalize each element of ${\bf{{w}}}^\star$ as follows
\begin{eqnarray}\label{tr}
	[{\bf{w}}^\star]_m=\frac{1}{\sqrt{N}}\frac{[{\bf{w}}^\star]_m}{||[{\bf{w}}^\star]_m||},m\in \mathcal{M},
\end{eqnarray}
where ${\bf{w}}^\star$ denotes the solution of~(\ref{OptDG}). Upon obtaining ${\bf{w}}^\star$, we proceed with the design of the receive beamformer for each RRH, namely
\begin{subequations}\label{OptDF}
	\begin{align}
	\;\;\;\;\;\;\;\;\;\;\;\;\;\;\;\;\;\;\;\;&\underset{\left\{{\bf{q}}_l\right\}}{\rm{max}}\;\;|{\bf{q}}_l{\bf{H}}_l{\bf{w}}^\star|^2\;\;\label{OptDF0}\\
	{\rm{s.t.}}\;\; &|[{\bf{q}}_l]_n|={1}/{\sqrt{N}}, n\in \mathcal{N}.\label{OptBF1}
	\end{align}
\end{subequations}

It is readily known that the optimal receive beamformer can be obtained as
\begin{eqnarray}\label{re}
	[{\bf{q}}_l^\star]_n=\frac{1}{\sqrt{N}}\frac{[{\bf{H}}_l{\bf{w}}^\star]_n^\ast}{||[{\bf{H}}_l{\bf{w}}^\star]_n||}, n\in \mathcal{N}.
\end{eqnarray}
Next, we replace $\hat{\bf{q}}_l$ in (\ref{OptDE}) with ${\bf{q}}_l^\star$ and resolve (\ref{OptDE}). The above process is repeated until the result converges. We summarize the alternatively iterative scheme in~Algorithm~\ref{algorithm1}. 

\subsection{Beamforming Design at the Access Link}
${\mathbb{P}}2$ is also difficult to handle due to the non-smooth and non-convex objective function (\ref{OptC0}) and the constant modulus constraint (\ref{OptC2}). Furthermore, joint optimization of analog and digital beamforming is extremely challenging.  Next, we first design analog beamformer ${\bf{Z}}$, and rewrite ${\bf{g}}_k{\bf{Z}}{\bf{v}}_k$ as
\begin{eqnarray}
	{\bf{g}}_k{\bf{Z}}{\bf{v}}_k=\sum\nolimits_{l=1}^{L}{\bf{g}}_k^l{\bf{z}}_l{\bf{v}}_k,
\end{eqnarray} 
where ${\bf{g}}_k=[{\bf{g}}_k^1,{\bf{g}}_k^2,\dots,{\bf{g}}_k^L]$ with ${\bf{g}}_k^l\in{\mathcal{C}}^{1\times N}$ denoting the sub-channel vector from RRH $l$ to actuator $k$. Similar to~\cite{Li_TWC_2016}, we design the sub-beamforming to maximize the equivalent sub-channel gain $|{\bf{g}}_k^l{\bf{z}}_l|^2$. The optimal sub-beamforming  can be expressed as
\begin{eqnarray}
	\left[{\bf{z}}_l\right]_n=\frac{1}{\sqrt{N}}\frac{[{\bf{g}}_k^{l}]_n^\ast}{||[{\bf{g}}_k^{l}]_n||}, n\in \mathcal{N}.
\end{eqnarray}

In addition, to guarantee fairness among the actuators, we maximize the equivalent sub-channel gain for each actuator in turn. Finally, we can obtain the analog beamformer ${\bf{Z}}^\star$ and the equivalent channel gain $\bar{\bf{g}}_k={\bf{g}}_k{\bf{Z}}^\star$. To this end, (\ref{rate1}) and (\ref{rate2}) can be rewritten as 
\begin{subequations}
	\begin{align}
\;\;\;\;\;\;\;\;\;\;\;R_{2,k}({\bf{V}})&=\log\left(1+\frac{|\bar{\bf{g}}_k{\bf{v}}_k|^2}{\sum_{i\neq k}^K|\bar{\bf{g}}_k{\bf{v}}_i|^2+\sigma_k^2}\right),\label{rate3}\\
R_{2,k}^s({\bf{V}})&=\log\left(1+\frac{|\bar{\bf{g}}_s^e{\bf{v}}_k|^2}{\sum_{i\neq k}^K|\bar{\bf{g}}_s^e{\bf{v}}_i|^2+\sigma_s^2}\right), \label{rate4}
\end{align}
\end{subequations} 
where $\bar{\bf{g}}_k^e={\bf{g}}_k^e{\bf{Z}}^\star$. 

In fact, Eves are usually passive, and their CSIs may not be perfectly known by the CP~\cite{Chu_TVT_20162}. Therefore, channel uncertainty is unavoidable and should be considered. In this paper,  we define the channel uncertainty as follows
\begin{eqnarray}
	\bar{\bf{g}}_s^e=\hat{\bf{g}}_s^e+\Delta \hat{\bf{g}}_s^e,
\end{eqnarray} 
where $\hat{\bf{g}}_s^e$ denotes the estimated equivalent channel vector, and $\Delta \hat{\bf{g}}_s^e$ is the corresponding error, which is assumed to be bounded by $\tau_s$, namely $\Delta \hat{\bf{g}}_s^e(\Delta \hat{\bf{g}}_s^e)^H\leq \tau_s$.

Next, we introduce an auxiliary variable $t$ and transform the original problem ${\mathbb{P}2}$ into
\setlength{\mathindent}{0cm}
\begin{subequations}\label{OptF}
	\begin{align}
	&\underset{\left\{{\bf{V}}\right\}}{\rm{min}}\;\;t\;\;\label{OptF0}\\
	{\rm{s.t.}}\;\;& R_{2,k}({\bf{V}})-\underset{s\in{\mathcal{S}}}{\max}\left\{R_{2,k}^s({\bf{V}})\right\}\geq \hat{\Psi}_k/{t}, k\in \mathcal{K},\label{OptF1}\\
	&\bar{\bf{g}}_s^e=\hat{\bf{g}}_s^e+\Delta \hat{\bf{g}}_s^e, \Delta \hat{\bf{g}}_s^e(\Delta \hat{\bf{g}}_s^e)^H\leq \tau_s,s\in{\mathcal{S}},\label{OptF2}\\
	&\sum\nolimits_{k=1}^{K}|{\bf{v}}_k(l)|^2\leq P_{\rm{max}}^l, l\in \mathcal{L}.\label{OptF3}
	\end{align}
\end{subequations} 

Problem (\ref{OptF}) is non-convex due to the constraints (\ref{OptF1}) and~(\ref{OptF2}). Next, we propose advanced approximation approaches to transform them into convex ones. By introducing auxiliary variables $\alpha_{k}$ and $\beta_{k}$, (\ref{OptF1}) can be split into the following constraints
\begin{subequations}
	\begin{align}
	\;\;\;\;\;\;\;\;\;\;\;\;\log(1+\alpha_{k})-\log(1+\beta_k)\geq \hat{\Psi}_k/{t}, k\in \mathcal{K},\label{A1}\\
	\alpha_{k}\leq \frac{|\bar{\bf{g}}_k{\bf{v}}_k|^2}{\sum_{i\neq k}^K|\bar{\bf{g}}_k{\bf{v}}_i|^2+\sigma_k^2}, k\in \mathcal{K},\label{B1}\\
	\beta_k \geq \frac{|\bar{\bf{g}}_s^e{\bf{v}}_k|^2}{\sum_{i\neq k}^K|\bar{\bf{g}}_s^e{\bf{v}}_i|^2+\sigma_s^2}, k\in \mathcal{K},s\in\mathcal{S}.\label{C1}
\end{align}
\end{subequations}
Nonetheless, (\ref{A1})-(\ref{C1}) are still non-convex constraints. Next, we approximate $\log(1+\beta_k)$ using the first order Taylor approximation formula, and obtain
\begin{eqnarray}
	\log(1+\beta_{k})\approx\log(1+\beta_{k}^{[i]})+\frac{\beta_{k}-\beta_{k}^{[i]}}{1+\beta_{k}^{[i]}},
\end{eqnarray}  
where $\beta_{k}^{[i]}$ denotes the value of $\beta_{k}$ at the $i$th iteration. On this basis, (\ref{A1}) can be transformed into the following convex constraint
\begin{eqnarray}\label{E1}
	\log(1\!+\!\alpha_{k})\!-\!\log(1\!+\!\beta_{k}^{[i]})\!-\!\frac{\beta_{k}\!-\!\beta_{k}^{[i]}}{1\!+\!\beta_{k}^{[i]}}\!\geq\! \hat{\Psi}_k/{t}, k\in \mathcal{K}.
\end{eqnarray}

To deal with~(\ref{B1}), we first define ${\bf{G}}_k={\bf{g}}_k^H{\bf{g}}_k$ and ${\bf{V}}_k={\bf{v}}_k{\bf{v}}_k^H$. By introducing auxiliary variable $\mu_k$, (\ref{B1}) can be split into the following constraints:
\begin{subequations}
	\begin{align}
	\;\;\;\;\;\;\;\;\;\;\;\;\;\alpha_{k}\mu_k&\leq {\rm{Tr}}(\bar{\bf{G}}_k{\bf{V}}_k),k\in \mathcal{K},\label{Eq1}\\
	\mu_k&\geq \sum\nolimits_{i\neq k}^K{{\rm{Tr}}(\bar{\bf{G}}_k{\bf{V}}_i)}+\sigma_k^2, k\in \mathcal{K}.\label{Eq2}
	\end{align}
\end{subequations} 

In fact, (\ref{Eq1}) and (\ref{Eq2}) can be regarded as SDP constraints with ${\rm{Rank}}({\bf{V}}_k)=1$. In addition, according to~\cite{Song_ICASSP_2016}, $\alpha_{k}\mu_k$ has the following upper bound as a valid surrogate function $\alpha_{k}\mu_k\leq \frac{\alpha_{k}^{[i]}}{2\mu_k^{[i]}}\mu_k^2+\frac{\mu_k^{[i]}}{2\alpha_{k}^{[i]}}\alpha_{k}^2$, and (\ref{Eq1}) can thus be transformed into the following convex constraint:
\begin{eqnarray}\label{Eq3}
 \frac{\alpha_{k}^{[i]}}{2\mu_k^{[i]}}\mu_k^2+\frac{\mu_k^{[i]}}{2\alpha_{k}^{[i]}}\alpha_{k}^2\leq {\rm{Tr}}(\bar{\bf{G}}_k{\bf{V}}_k),k\in \mathcal{K},
\end{eqnarray}
where $\alpha_{k}^{[i]}$ and $\mu_{k}^{[i]}$ represent the values of $\alpha_{k}$ and $\mu_{k}$ at the $i$th iteration, respectively.

To handle~(\ref{C1}), we introduce the classic $\mathcal{S}$-Procedure~\cite{boyd1994linear}:
\begin{lemma}
	Define the following function
	\begin{eqnarray}
	f_i({\bf{v}})={\bf{v}}{\bf{U}}_i{\bf{v}}^H+2{\rm{Re}}\{{\bf{c}}_i{\bf{v}}^H\}+b_i, i\in\{1,2\}, 
	\end{eqnarray}
	where ${\bf{v}}\in{\mathbb{C}}^{1\times \Gamma}$, ${\bf{U}}_i\in{\mathbb{C}}^{\Gamma\times \Gamma}$, ${\bf{c}}_i\in{\mathbb{C}}^{1\times \Gamma}$, $b_i\in\mathbb{R}$ and $\Gamma$ is any integer. If the following expression
	\begin{eqnarray}
		f_i({\bf{v}})\leq0 \Rightarrow  f_2({\bf{v}})\leq 0
	\end{eqnarray}
	holds, there must exist a $\lambda$ satisfying
	\begin{eqnarray}\label{C121}
	\lambda\left[ \begin{array}{ccc}
	{\bf{U}}_1 & {\bf{c}}_1^H \\
	{\bf{c}}_1 & b_1
	\end{array} 
	\right ]-\left[ \begin{array}{ccc}
	{\bf{U}}_2 & {\bf{c}}_2^H \\
	{\bf{c}}_2 & b_2
	\end{array} 
	\right ]\succeq {\bf{0}}.
	\end{eqnarray}  
\end{lemma}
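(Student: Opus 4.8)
Here is a plan for proving the $\mathcal{S}$-procedure lemma stated above.

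The reverse implication is immediate: if some $\lambda\ge 0$ satisfies~(\ref{C121}) then, writing $\mathbf{M}_i=\left[\begin{array}{cc}\mathbf{U}_i&\mathbf{c}_i^H\\ \mathbf{c}_i&b_i\end{array}\right]$ and $\tilde{\mathbf{v}}=[\mathbf{v},1]$, one gets $\lambda f_1(\mathbf{v})-f_2(\mathbf{v})=\tilde{\mathbf{v}}(\lambda\mathbf{M}_1-\mathbf{M}_2)\tilde{\mathbf{v}}^H\ge 0$ for every $\mathbf{v}$, hence $f_1(\mathbf{v})\le 0\Rightarrow f_2(\mathbf{v})\le\lambda f_1(\mathbf{v})\le 0$. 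The substance is the forward direction, which I plan to establish by homogenizing and then separating two convex sets. One caveat first: the statement as written needs a Slater-type regularity hypothesis, namely that there exists $\bar{\mathbf{v}}$ with $f_1(\bar{\mathbf{v}})<0$; without it the conclusion can fail (take $f_1(v)=|v|^2$ and $f_2(v)=\mathrm{Re}\{v\}$ on $\mathbb{C}$). This hypothesis holds wherever the lemma is applied in this paper, since the error region $\Delta\hat{\mathbf{g}}_s^e(\Delta\hat{\mathbf{g}}_s^e)^H\le\tau_s$ has nonempty interior, so I will assume it.

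\textbf{Homogenization.} Assuming without loss of generality that each $\mathbf{U}_i$ is Hermitian (otherwise symmetrize it, which does not change $f_i$), a direct computation gives $f_i(\mathbf{v})=\tilde{\mathbf{v}}\mathbf{M}_i\tilde{\mathbf{v}}^H$ with $\tilde{\mathbf{v}}=[\mathbf{v},1]\in\mathbb{C}^{1\times(\Gamma+1)}$. Hence proving~(\ref{C121}) is the same as exhibiting $\lambda\ge 0$ with $\tilde{\mathbf{v}}(\lambda\mathbf{M}_1-\mathbf{M}_2)\tilde{\mathbf{v}}^H\ge 0$ for \emph{all} $\tilde{\mathbf{v}}\in\mathbb{C}^{1\times(\Gamma+1)}$, not merely for those with last entry $1$.

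\textbf{Convexity of the joint range.} Consider $\mathcal{D}=\{(\tilde{\mathbf{v}}\mathbf{M}_1\tilde{\mathbf{v}}^H,\ \tilde{\mathbf{v}}\mathbf{M}_2\tilde{\mathbf{v}}^H):\tilde{\mathbf{v}}\in\mathbb{C}^{1\times(\Gamma+1)}\}\subseteq\mathbb{R}^2$. The key structural fact is that $\mathcal{D}$ is a convex cone. Indeed, scaling $\tilde{\mathbf{v}}$ by $t\in\mathbb{R}$ scales both coordinates by $t^2\ge 0$, so $\mathcal{D}$ is the nonnegative-ray cone generated by the slice $\{\|\tilde{\mathbf{v}}\|=1\}$, and that slice, under the identification $\mathbb{R}^2\cong\mathbb{C}$, is precisely the numerical range of the single (in general non-Hermitian) matrix $\mathbf{M}_1+j\mathbf{M}_2$, which is convex and compact by the Toeplitz--Hausdorff theorem; the cone over a convex compact set together with the origin is convex. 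This is the complex analogue of Dines's theorem on a pair of real quadratic forms, and I expect it to be the main obstacle in a fully self-contained write-up.

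\textbf{Separation and extraction of $\lambda$.} I first turn the hypothesis into a statement about $\mathcal{D}$. For $\tilde{\mathbf{v}}=[\mathbf{v},\tau]$ with $\tau\neq 0$ one has $\tilde{\mathbf{v}}\mathbf{M}_i\tilde{\mathbf{v}}^H=|\tau|^2 f_i(\mathbf{v}/\tau)$, so $f_1\le 0\Rightarrow f_2\le 0$ gives $x_1\le 0\Rightarrow x_2\le 0$ for all points of $\mathcal{D}$ arising this way; a short scaling argument along the direction of the Slater point $\bar{\mathbf{v}}$ extends this to the points coming from $\tilde{\mathbf{v}}$ with zero last entry, so $\mathcal{D}\cap\{(x_1,x_2):x_1\le 0,\ x_2>0\}=\emptyset$. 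Now $\mathcal{D}$ and the convex set $\{x_1\le 0,\ x_2>0\}$ are disjoint convex subsets of $\mathbb{R}^2$ whose closures both contain the origin, so the separating-hyperplane theorem yields $(a_1,a_2)\neq(0,0)$ with $a_1x_1+a_2x_2\ge 0$ on $\mathcal{D}$ and $a_1x_1+a_2x_2\le 0$ on $\{x_1\le 0,\ x_2\ge 0\}$. Evaluating the second inequality at $(-1,0)$ and $(0,1)$ forces $a_1\ge 0$ and $a_2\le 0$; the Slater point supplies a member of $\mathcal{D}$ with $x_1<0$, which rules out $a_2=0$, hence $a_2<0$. Setting $\lambda=a_1/(-a_2)\ge 0$, the inequality on $\mathcal{D}$ reads $\lambda x_1-x_2\ge 0$ for every $(x_1,x_2)\in\mathcal{D}$, i.e.\ $\tilde{\mathbf{v}}(\lambda\mathbf{M}_1-\mathbf{M}_2)\tilde{\mathbf{v}}^H\ge 0$ for all $\tilde{\mathbf{v}}$, which is exactly~(\ref{C121}). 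The two delicate points are thus the convexity of $\mathcal{D}$ and the twofold use of the Slater condition (to close the ``last entry zero'' gap and to exclude the degenerate multiplier); a reader content with a citation may instead invoke the standard $\mathcal{S}$-procedure of~\cite{boyd1994linear}.
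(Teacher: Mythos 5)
Your argument is sound, but it is worth noting that the paper does not actually prove this lemma at all: it is stated as the ``classic $\mathcal{S}$-Procedure'' and dispatched with a citation to~\cite{boyd1994linear}, so there is no in-paper proof to match. What you supply is essentially the standard proof of the (complex) $\mathcal{S}$-lemma: the reverse direction by the homogenization identity $f_i(\mathbf{v})=\tilde{\mathbf{v}}\mathbf{M}_i\tilde{\mathbf{v}}^H$, and the forward direction via convexity of the joint quadratic range (the cone over the numerical range of $\mathbf{M}_1+j\mathbf{M}_2$, convex by Toeplitz--Hausdorff, i.e.\ the complex Dines-type fact) followed by a separating-hyperplane argument with $c=0$ and extraction of $\lambda=a_1/(-a_2)\ge 0$. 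The two delicate points you flag are exactly the right ones, and your patches are correct: the $\tau=0$ (homogeneous) points are handled by perturbing along the Slater point with a suitable phase and letting the scaling parameter grow, and the Slater point also rules out $a_2=0$. Your observation that the lemma as printed is incomplete is also accurate and useful: as stated it omits both the strict-feasibility hypothesis (your counterexample $f_1(v)=|v|^2$, $f_2(v)=\mathrm{Re}\{v\}$ is valid) and the requirement $\lambda\ge 0$, and it contains the typo $f_i\le 0$ for $f_1\le 0$; the cited formulation in~\cite{boyd1994linear} includes these conditions, and strict feasibility indeed holds in the paper's application since $\tau_s>0$ gives the uncertainty region nonempty interior. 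The only cosmetic quibble is the remark that symmetrizing $\mathbf{U}_i$ ``does not change $f_i$'' --- that is true only once one agrees $f_i$ is real-valued (i.e.\ $\mathbf{U}_i$ is implicitly Hermitian), which is the intended reading. In short: correct, more informative than the paper, which relies on citation rather than proof.
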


Combing (\ref{OptF2}) and~(\ref{C1}), we have 
\begin{eqnarray}\label{Q1}
 \begin{aligned}
	&\Delta \hat{\bf{g}}_s^e {\bf{V}}_k (\Delta \hat{\bf{g}}_s^e)^H+2{\rm{Re}}\{\hat{\bf{g}}_s^e {\bf{V}}_k (\Delta \hat{\bf{g}}_s^e)^H\}+\hat{\bf{g}}_s^e {\bf{V}}_k (\hat{\bf{g}}_s^e)^H \\
	\leq&\beta_{k} \left(\sum_{i\neq k}^K\left(\Delta \hat{\bf{g}}_s^e {\bf{V}}_i (\Delta \hat{\bf{g}}_s^e)^H\!+\!2{\rm{Re}}\{\hat{\bf{g}}_s^e {\bf{V}}_i (\Delta \hat{\bf{g}}_s^e)^H\}\!+\!\hat{\bf{g}}_s^e {\bf{V}}_i (\hat{\bf{g}}_s^e)^H\right)\!+\!\sigma_s^2\right).
	\end{aligned}
\end{eqnarray}
Then, we introduce the auxiliary variables $\psi_k$, $\kappa_k$ and $\phi_k$, and split (\ref{Q1}) into the following constraints:
\setlength{\mathindent}{0cm}
\begin{subequations}
	\begin{align}
&\Delta \hat{\bf{g}}_s^e {\bf{V}}_k (\Delta \hat{\bf{g}}_s^e)^H+2{\rm{Re}}\{\hat{\bf{g}}_s^e {\bf{V}}_k (\Delta \hat{\bf{g}}_s^e)^H\}+\hat{\bf{g}}_s^e {\bf{V}}_k (\hat{\bf{g}}_s^e)^H-\psi_k\leq  0,\label{QQ1}\\
&\Delta \hat{\bf{g}}_s^e {\bf{\Xi}}_i (\Delta \hat{\bf{g}}_s^e)^H\!+\!2{\rm{Re}}\{\hat{\bf{g}}_s^e {\bf{\Xi}}_i (\Delta \hat{\bf{g}}_s^e)^H\}\!+\!\hat{\bf{g}}_s^e {\bf{\Xi}}_i (\hat{\bf{g}}_s^e)^H\!+\!\phi_k\!-\!\sigma_s^2\leq 0,\label{QQ2}\\
&\Delta \hat{\bf{g}}_s^e(\Delta \hat{\bf{g}}_s^e)^H-\tau_s\leq 0,\label{QQ3}\\
&\psi_k\leq \kappa_k^2, \kappa_k^2\leq \beta_k\phi_k,\label{QQ4}
\end{align}
\end{subequations}
where ${\bf{\Xi}}_i=-\sum_{i\neq k}^K{\bf{V}}_i$. Combing $Lemma$ 1, (\ref{QQ1}) and (\ref{QQ3}), we can obtain the following convex linear matrix inequality (LMI)
\begin{eqnarray}\label{LMI1}
\left[ \begin{array}{ccc}
\gamma_k{\bf{I}}-{\bf{V}}_k & -(\hat{\bf{g}}_s^e {\bf{V}}_k)^H \\
-\hat{\bf{g}}_s^e {\bf{V}}_k & \psi_k-\gamma_k\tau_s-\hat{\bf{g}}_s^e {\bf{V}}_k (\hat{\bf{g}}_s^e)^H
\end{array} \right ]\succeq {\bf{0}}.
\end{eqnarray}  
Similarly, (\ref{QQ2}) can be recast into the following convex LMI:
\begin{eqnarray}\label{LMI2}
\left[ \begin{array}{ccc}
\varepsilon_k{\bf{I}}-{\bf{\Xi}}_k & -(\hat{\bf{g}}_s^e {\bf{\Xi}}_k)^H \\
-\hat{\bf{g}}_s^e {\bf{\Xi}}_k & \sigma_s^2-\phi_k-\varepsilon_k\tau_s-\hat{\bf{g}}_s^e {\bf{\Xi}}_k (\hat{\bf{g}}_s^e)^H
\end{array} \right ]\succeq {\bf{0}}.
\end{eqnarray}  
In addition, $\kappa_k^2\leq \beta_k\phi_k$ can be expressed in the following matrix form according to the Schur complement lemma~\cite{Zhang_TVT_2017} 
\begin{eqnarray}\label{LMI3}
\left[ \begin{array}{ccc}
\beta_k& \kappa_k \\
\kappa_k & \phi_k
\end{array} \right ]\succeq {\bf{0}}.
\end{eqnarray}  
According to the first-order Taylor approximation formula, $\kappa_k^2\approx (\kappa_k^{[i]})^2+2(\kappa_k-\kappa_k^{[i]})\kappa_k^{[i]}$. Therefore,  $\psi_k\leq \kappa_k^2$ in (\ref{QQ4}) can be written as the following convex constraint:
\begin{eqnarray}\label{LMI4}
   (\kappa_k^{[i]})^2+2(\kappa_k-\kappa_k^{[i]})\kappa_k^{[i]}\geq \psi_k, k\in{\mathcal{K}}.
\end{eqnarray}

Finally, we formulate the SDP problem as
\begin{subequations}\label{OptG}
	\begin{align}
	\;\;\;\;\;\;\;\;\;\;\;\;\;\;\;\;\;\;\;\;&\underset{\left\{{\bf{V}}_k,\alpha_{k},\beta_k,\mu_k,\gamma_k,\varepsilon_k,\kappa_k,\phi_k,\psi_k\right\}}{\rm{min}}\;\;t\;\;\label{OptG0}\\
	{\rm{s.t.}}\;\;&\sum\nolimits_{k=1}^{K}{\bf{V}}_k(l,l)\leq P_{\rm{max}}^l, l\in \mathcal{L},\label{OptG1}\\
	&{\rm{Rank}}({\bf{V}}_k)=1,k\in \mathcal{K},\label{OptG2}\\
	&{\rm{(\ref{E1}),(\ref{Eq2}),(\ref{Eq3}),(\ref{LMI1}),(\ref{LMI2}),(\ref{LMI3}),({\ref{LMI4}})}}.
	\end{align}
\end{subequations} 

Obviously, (\ref{OptG}) is a non-convex SDP due to the rank-one constraint. However, by removing (\ref{OptG2}), the above problem becomes a convex SDP and can be solved efficiently by numerical solvers such as SDPT3~\cite{toh1999sdpt3}.  To obtain the solution of ${\mathbb{P}}2$, we need to iteratively solve the relaxed problem of  (\ref{OptG}). The procedure  is summarized as Algorithm~\ref{algorithm2}.
\begin{algorithm}[t]
	{\caption{The Proposed Iterative Algorithm for Solving~${\mathbb{P}}2$.}
		\label{algorithm2}
		{\bf{Initialize}} $\alpha_k^{[i]}$, $\beta_k^{[i]}$, $\mu_k^{[i]}$, $\kappa_k^{[i]}$, $i=1$, the maximum iteration index $I_{\rm{max}}$.\\
		\Repeat{$i=T_{\rm{max}}$ {\rm{or} Convergence}}{
			Solve the relaxed problem (\ref{OptG}) and obtain the solution ${\bf{V}}_k^{\star},\alpha_{k}^{\star},\beta_k^{\star},\mu_k^{\star},\gamma_k^{\star},\varepsilon_k^{\star},\kappa_k^{\star},\phi_k^{\star},\psi_k^{\star}$.\\
		Update $i\leftarrow i+1$.\\
		Update $\alpha_k^{[i]}\leftarrow \alpha_k^{\star}$, $\beta_k^{[i]}\leftarrow \beta_k^{\star}$, $\mu_k^{[i]}\leftarrow \mu_k^{\star}$, $\kappa_k^{[i]}\leftarrow \kappa_k^{\star}$.
				}
}
\end{algorithm} 
Meanwhile, we have the following proposition.
\begin{proposition}\label{Pro1}
	The objective function of (\ref{OptG}) is a non-increasing sequence at each iteration based on the proposed Algorithm~\ref{algorithm2}, and it  converges to a stationary solution.
\end{proposition}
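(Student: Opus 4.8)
The plan is to run the standard successive convex approximation (SCA) / inner-approximation argument. It splits into three parts: (i) every convexification in Section~III-B is a \emph{tight inner approximation} of the constraint it replaces; (ii) this forces the optimal value of the per-iteration subproblem to be monotone and bounded, hence convergent; (iii) a standard SCA convergence theorem then identifies limit points with KKT points of the relaxed problem.

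First I would check the inner-approximation property constraint by constraint. The auxiliary-variable splittings --- $(\ref{B1})$ into $(\ref{Eq1})$--$(\ref{Eq2})$, and $(\ref{Q1})$ into $(\ref{QQ1})$--$(\ref{QQ4})$ --- are lossless, since at an optimum the bounding variables $\mu_k,\psi_k,\kappa_k,\phi_k$ are driven to their extreme values; likewise the $\mathcal{S}$-procedure step producing $(\ref{LMI1})$--$(\ref{LMI2})$ is a lossless reformulation by Lemma~1 and the Schur-complement step producing $(\ref{LMI3})$ is exact. The genuine approximations are: $(\ref{E1})$, which implies $(\ref{A1})$ because $\log(1+x)$ is concave so its first-order expansion over-estimates it; $(\ref{Eq3})$, which implies $(\ref{Eq1})$ because $\alpha_k\mu_k\le\frac{\alpha_k^{[i]}}{2\mu_k^{[i]}}\mu_k^2+\frac{\mu_k^{[i]}}{2\alpha_k^{[i]}}\alpha_k^2$ by the arithmetic--geometric mean inequality, with equality at $(\alpha_k^{[i]},\mu_k^{[i]})$; and $(\ref{LMI4})$, which implies $\psi_k\le\kappa_k^2$ because $\kappa_k^2$ is convex so its first-order expansion under-estimates it, with equality at $\kappa_k^{[i]}$. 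In each case the surrogate also matches the original in value and gradient at the linearization point, so the relaxed form of $(\ref{OptG})$ solved at each iteration is a restriction of the rank-relaxed version of $(\ref{OptF})$ obtained after the lifting ${\bf{V}}_k={\bf{v}}_k{\bf{v}}_k^{H}$.

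Second I would establish feasibility carry-over and monotonicity. Since the linearization points of iteration $i+1$ are set to the optimal solution of iteration $i$, and each surrogate coincides with the true constraint at its linearization point, the iteration-$i$ solution --- feasible for the true constraints by part~(i) --- also satisfies all the iteration-$(i+1)$ surrogate constraints together with the unchanged power budget $(\ref{OptG1})$; hence it is feasible for subproblem $i+1$, so $t^{[i+1]}\le t^{[i]}$. By induction from a feasible initialization, every subproblem is feasible. The secrecy rates are bounded above because the transmit powers are bounded, so $t$ is bounded below by a positive constant, and $\{t^{[i]}\}$, being non-increasing and bounded below, converges by the monotone convergence theorem.

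Finally I would invoke the convergence theorem for SCA with inner approximations (e.g., Marks--Wright, or Razaviyayn \emph{et al.}): because every subproblem is solved to global optimality and every surrogate is a tight inner approximation matching value and gradient at the linearization point, every limit point of the iterate sequence $\{({\bf{V}}_k^{\star},\alpha_k^{\star},\beta_k^{\star},\mu_k^{\star},\gamma_k^{\star},\varepsilon_k^{\star},\kappa_k^{\star},\phi_k^{\star},\psi_k^{\star})\}$ is a KKT point of the rank-relaxed problem. The main obstacle I anticipate is the bookkeeping needed to verify gradient consistency simultaneously for the whole coupled block of surrogates --- in particular for the matrix-valued constraints $(\ref{LMI1})$--$(\ref{LMI2})$ arising after the lifting --- and to show the iterates lie in a compact set so that limit points exist (compactness follows from $(\ref{OptG1})$ and the bounds it induces on the auxiliary variables). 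A caveat worth stating explicitly is that stationarity is established for the rank-relaxed problem; producing a rank-one, hence feasible for ${\mathbb{P}}2$, solution is a separate matter, addressed numerically in the paper.
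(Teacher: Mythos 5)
Your proposal follows essentially the same route as the paper's proof: show that each genuine convex surrogate ((\ref{E1}), (\ref{Eq3}), (\ref{LMI4})) is a tight inner approximation that coincides with the original constraint at the linearization point, conclude that the iteration-$i$ optimum remains feasible for the iteration-$(i+1)$ subproblem, hence the objective value is non-increasing and, being bounded below because of the finite transmit power, convergent. Your additional step (iii), invoking a Marks--Wright/Razaviyayn-type SCA result to identify limit points with KKT points, together with the explicit caveat that stationarity is established only for the rank-relaxed problem, is a sound (and somewhat more rigorous) elaboration of the paper's brief assertion that the monotone bounded sequence converges to a stationary solution.
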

\begin{proof}
	To prove proposition~\ref{Pro1},  we first need to confirm that the solution of problem~(\ref{OptG}) at the $i$th iteration is also a feasible solution for the $(i+1)$th iteration. We assume that ${\bf{V}}_k^{\star},\alpha_{k}^{\star},\beta_k^{\star},\mu_k^{\star},\gamma_k^{\star},\varepsilon_k^{\star},\kappa_k^{\star},\phi_k^{\star},\psi_k^{\star}$ are the optimal solutions of problem~(\ref{OptG}) at the $i$th iteration.  To proceed with $\mathbb{P}$2, the convex approximated techniques are adopted for constraints (\ref{E1}), (\ref{Eq3}) and (\ref{LMI4}). Therefore, we need to prove that  those constraints still hold at the $(i+1)$th iteration  for the solutions obtained from the $i$th iteration. To facilitate analysis, we define the following function
	\begin{eqnarray}
		f(\beta_k^{[i]})=\log(1+\beta_{k}^{[i]})+\frac{\beta_{k}^\star-\beta_{k}^{[i]}}{1+\beta_{k}^{[i]}}.
	\end{eqnarray}
	
By replacing the variable $\beta_k^{[i]}$ at the $(i\!+\!1)$th iteration with the value obtained at the $i$th iteration,  namely   $\beta_k^{[i\!+\!1]}\!=\! \beta_k^{\star}$, we have
	\begin{subequations}
	\begin{align}
\;\;\;\;\;\;\;f(\beta_k^{[i+1]})&=\log(1+\beta_{k}^{[i+1]})+\frac{\beta_{k}^\star-\beta_{k}^{[i+1]}}{1+\beta_{k}^{\star}}\\
&=\log(1+\beta_{k}^\star)\\
&\geq \log(1+\beta_{k}^{[i]})+\frac{\beta_{k}^\star-\beta_{k}^{[i]}}{1+\beta_{k}^{[i]}}.\label{QQQ1}
\end{align}
\end{subequations}
where (\ref{QQQ1}) is obtained by the first-order Taylor approximation. Therefore, we show that (\ref{E1}) still holds at the $(i\!+\!1)$th iteration  for the solutions obtained from the $i$th iteration. The same conclusions can be obtained for (\ref{Eq3}) and (\ref{LMI4}) following a similar procedure, and they are thus omitted here. As a result, the solution of problem~(\ref{OptG}) at the $i$th iteration is also a feasible solution for the $(i\!+\!1)$th iteration. 

Since problem~(\ref{OptG}) is convex, the objective function value achieved at each iteration will decrease or at least maintain the value achieved at the previous iteration.  Due to  the limited transmit power, the objective function value has a lower bound and converges to a stationary solution. 
\end{proof}

\begin{table}[t]
	\caption{{Ratio of Rank-one Solutions.}}
	\label{table1}
	\begin{center}
		\begin{tabular}{|c|c|c|c|}
			\hline
			$P_{\rm{max}}^l$ & 20 dBm & 30 dBm& 40 dBm \\\hline
			Rank-one&991 & 988 & 993 \\\hline
			Ratio&99.1\%&98.8\% &99.3\%\\\hline
		\end{tabular}
	\end{center}
\end{table} 

Finally, we need to consider whether the obtained solution satisfies the rank-one constraint or not. The  rank-one solution is satisfied if the following condition holds~\cite{Ni_TCOM_2019}:
\begin{eqnarray}\label{sdp}
	\frac{\Upsilon_{\rm{max}}({\bf{V}}_k)}{{\rm{Tr}}({\bf{V}}_k)}=1,k\in\mathcal{K},
\end{eqnarray}
where $\Upsilon_{\rm{max}}({\bf{V}}_k)$ denotes the maximum eigenvalue of ${\bf{V}}_k$. To verify (\ref{sdp}), 
we perform 1,000 times simulations, and the solutions are summarized in Table~\ref{table1}.  From Table~\ref{table1}, one can observe that the probability of a rank-one solutions reaches up to almost 99\%. Therefore, we can most likely obtain the rank-one solutions via the proposed algorithm. Even when the solutions are not rank-one,   several advanced methods can be applied  to reconstruct rank-one solutions, e.g., hybrid beamforming design scheme~\cite{Ng_TVT_2016} and randomization beamforming design scheme~\cite{Luo_TSP_2006}.


\section{Numerical Results}
In this section, numerical results are presented to evaluate the performance of the proposed schemes. For simplicity, we assume that all RRHs have the same maximum transmit power, i.e., $P_{\rm{max}}^l\!=\!P_{\rm{max}}$. All noise powers are assumed to be the same, i.e., $\sigma_k^2\!=\!\sigma_s^2\!=\!\sigma^2$. All RRHs, actuators and Eves are uniformly distributed within a circular cell with 100~m radius. The distance between the CP and the cell center is 300~m.  We assume that the channel between the CP and RRHs has a single path, and the path loss is modeled as $1/(1+(d_l/d_0)^\varrho)$~\cite{He_TWC_2019}, where $d_l$, $d_0$ and $\varrho$ denote the distance between the CP and RRH~$l$, reference distance and the pathloss exponent, respectively. The AoD/AoA are assumed uniformly distributed within $[0,\;2\pi]$. In addition, the channels from the RRHs to actuators and Eves contain $G=4$ paths, and the pathloss is modeled as $1/(1+(d_{l,k}/d_0)^\varsigma)$. The mmWave bandwidth is set to be 1 GHz. The default simulation parameters are listed in Table~\ref{table2}.  Unless otherwise specified, these default values are used in simulation.
\begin{table}[t]
	\caption{{Simulation Parameters}}
	\label{table2}
	\begin{center}
		\begin{tabular}{|c|c||c|c||c|c|}
			\hline
			Symbol & Value & Symbol& Value&Symbol&Value \\\hline
			$M$&120 & $N$ & 6 & $L$ & 6 \\\hline
			$K$&4   & $S$ & 2 & $\sigma^2$& 0.01\\\hline
			$d_0$& 10 m &$\varrho$ & 2 &$\varsigma$ & 4 \\\hline
			$G$ & 4 & $P$ & 46 dBm &$P_{\rm{max}}$& 40 dBm \\\hline 
			$U$&10&-&-&-&-\\\hline 
		\end{tabular}
	\end{center}
\end{table} 

\begin{figure}[t]
	\begin{center}
		\includegraphics[width=8cm,height=6cm]{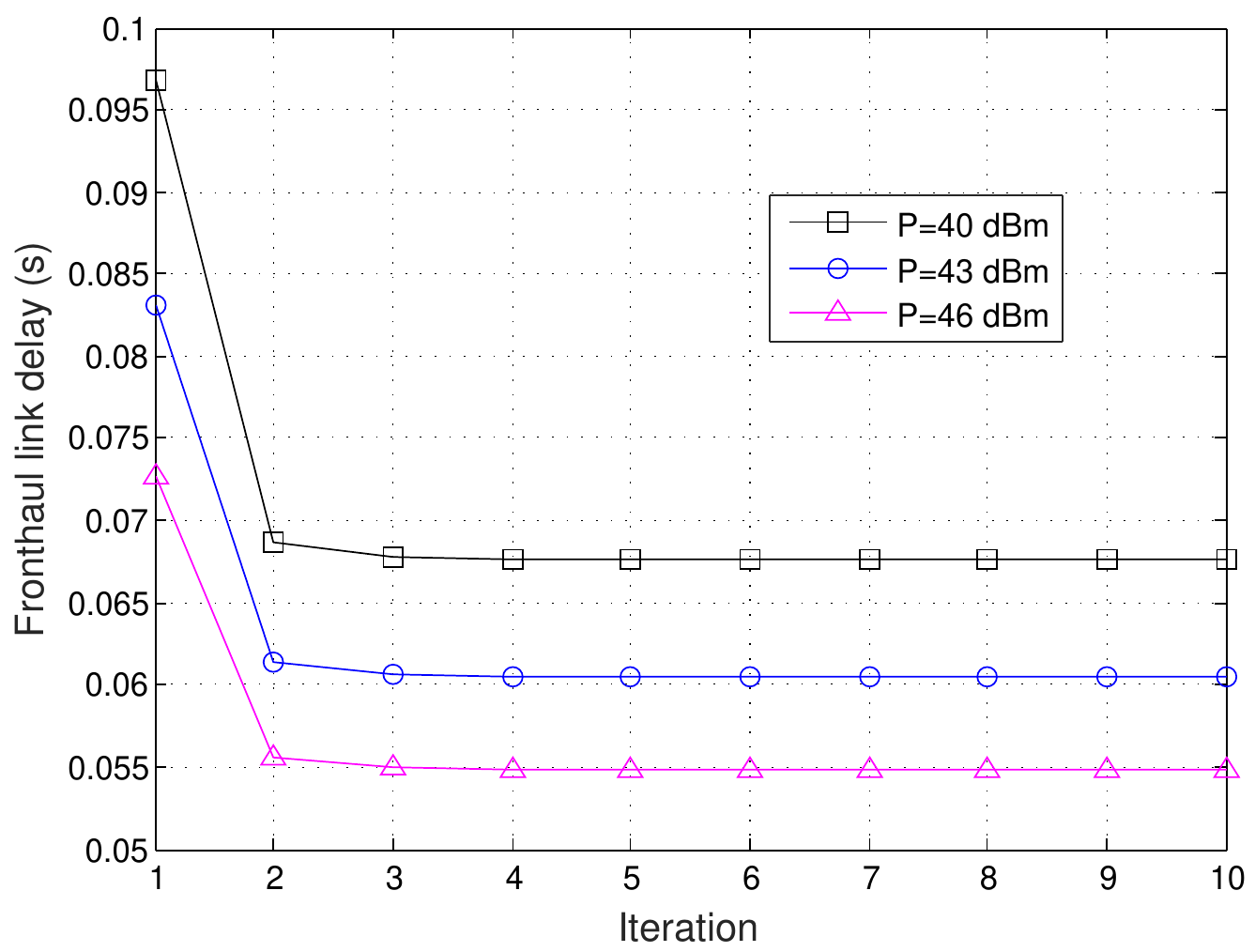}
		\caption{The fronthaul link delay  versus iteration.}
		\label{Figuer1}
	\end{center}
\end{figure}

\begin{figure}[t]
	\begin{center}
		\includegraphics[width=8cm,height=6cm]{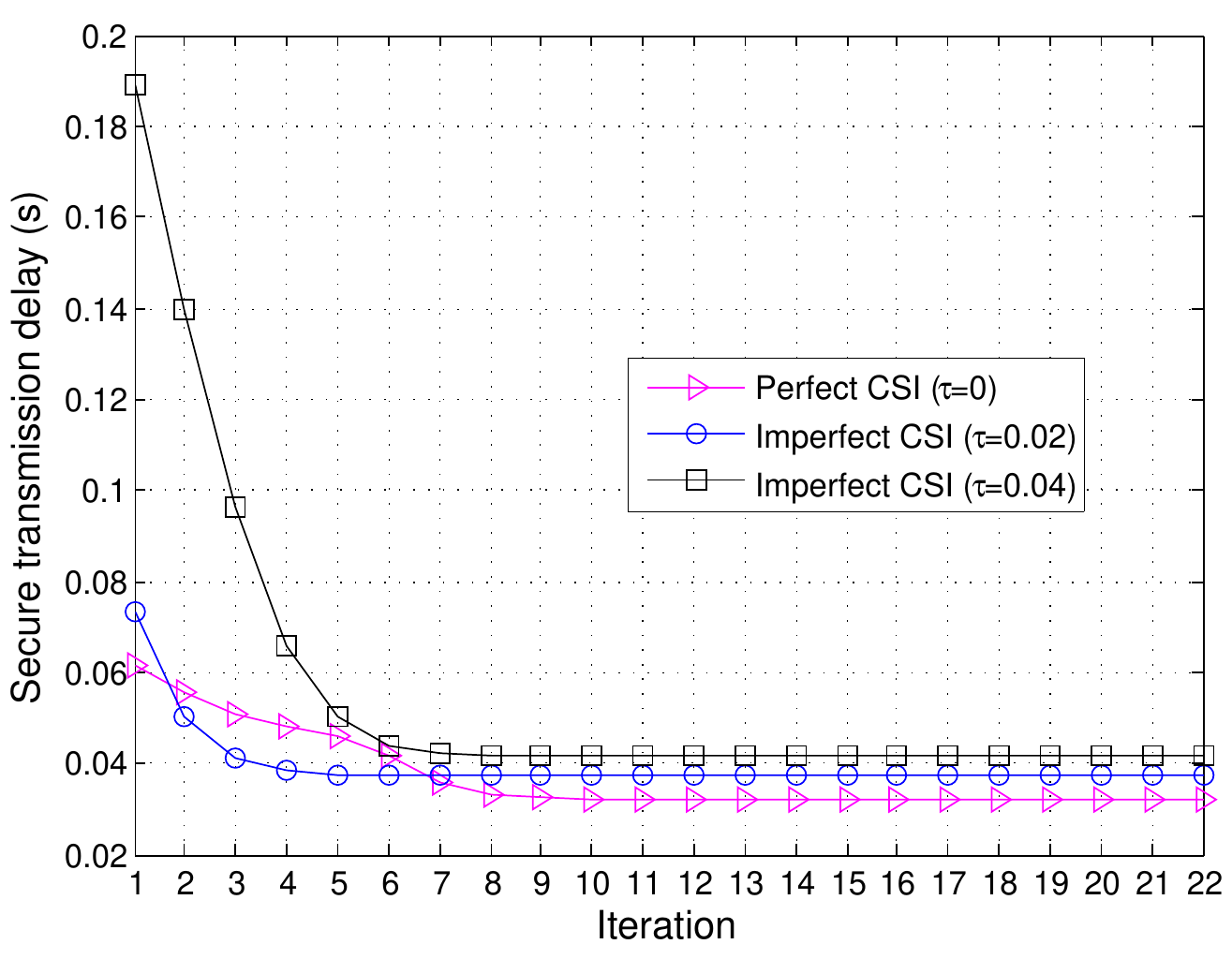}
		\caption{The secure transmission delay versus iteration.}
		\label{Figuer2}
	\end{center}
\end{figure}

\begin{figure}[t]
	\begin{center}
		\includegraphics[width=8cm,height=6cm]{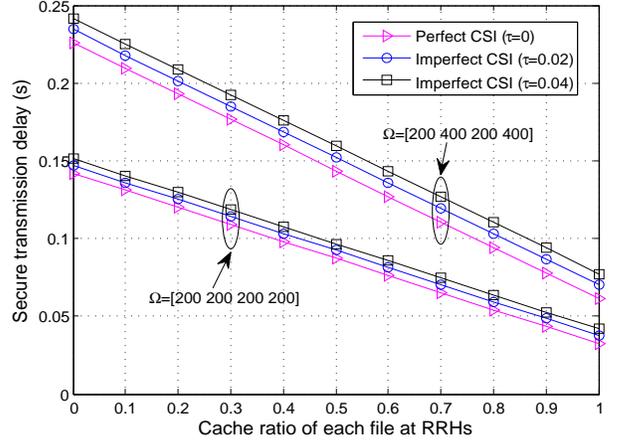}
		\caption{The secure transmission delay versus cache ratio of each file at the RRHs.}
		\label{Figuer3}
	\end{center}
\end{figure}
\begin{figure}[t]
	\begin{center}
		\includegraphics[width=8cm,height=6cm]{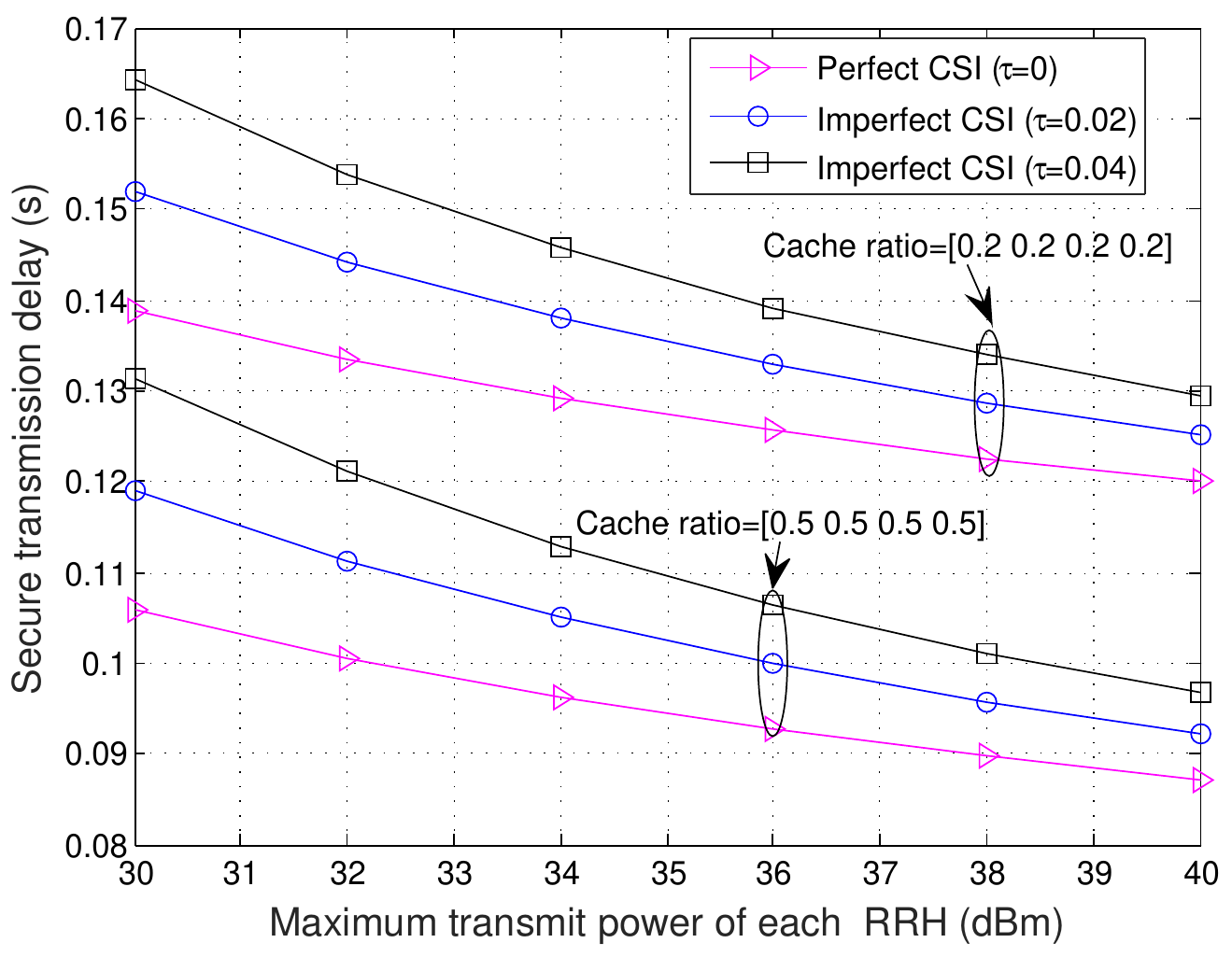}
		\caption{The secure transmission delay versus maximum transmit power of each RRH with $\Omega$=[200 200 200 200].}
		\label{Figuer4}
	\end{center}
\end{figure}
\begin{figure}[t]
	\begin{center}
		\includegraphics[width=8cm,height=6cm]{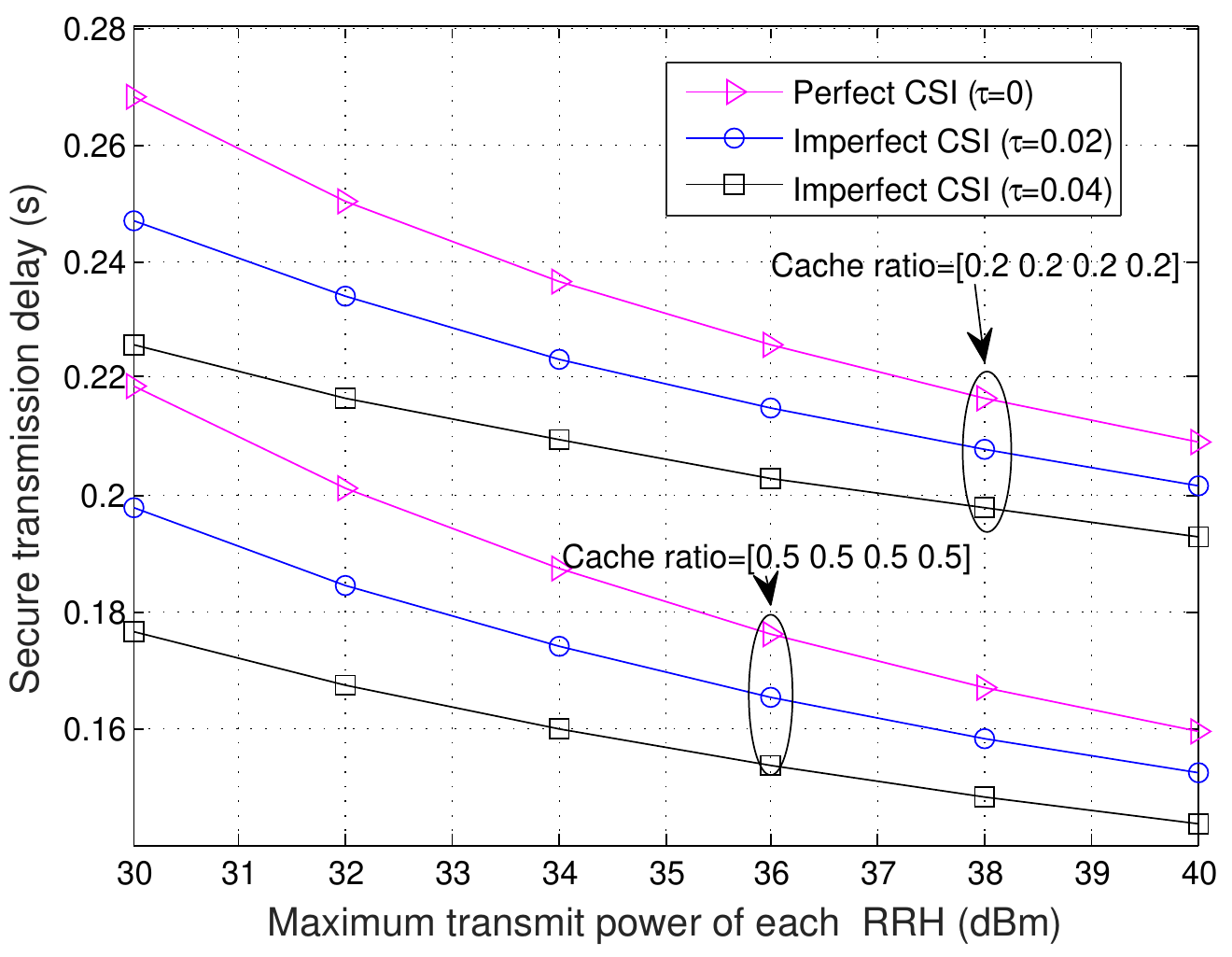}
		\caption{The secure transmission delay versus maximum transmit power of each RRH with $\Omega$=[200 400 200 400].}
		\label{Figuer5}
	\end{center}
\end{figure}
\begin{figure}[t]
	\begin{center}
		\includegraphics[width=8cm,height=6cm]{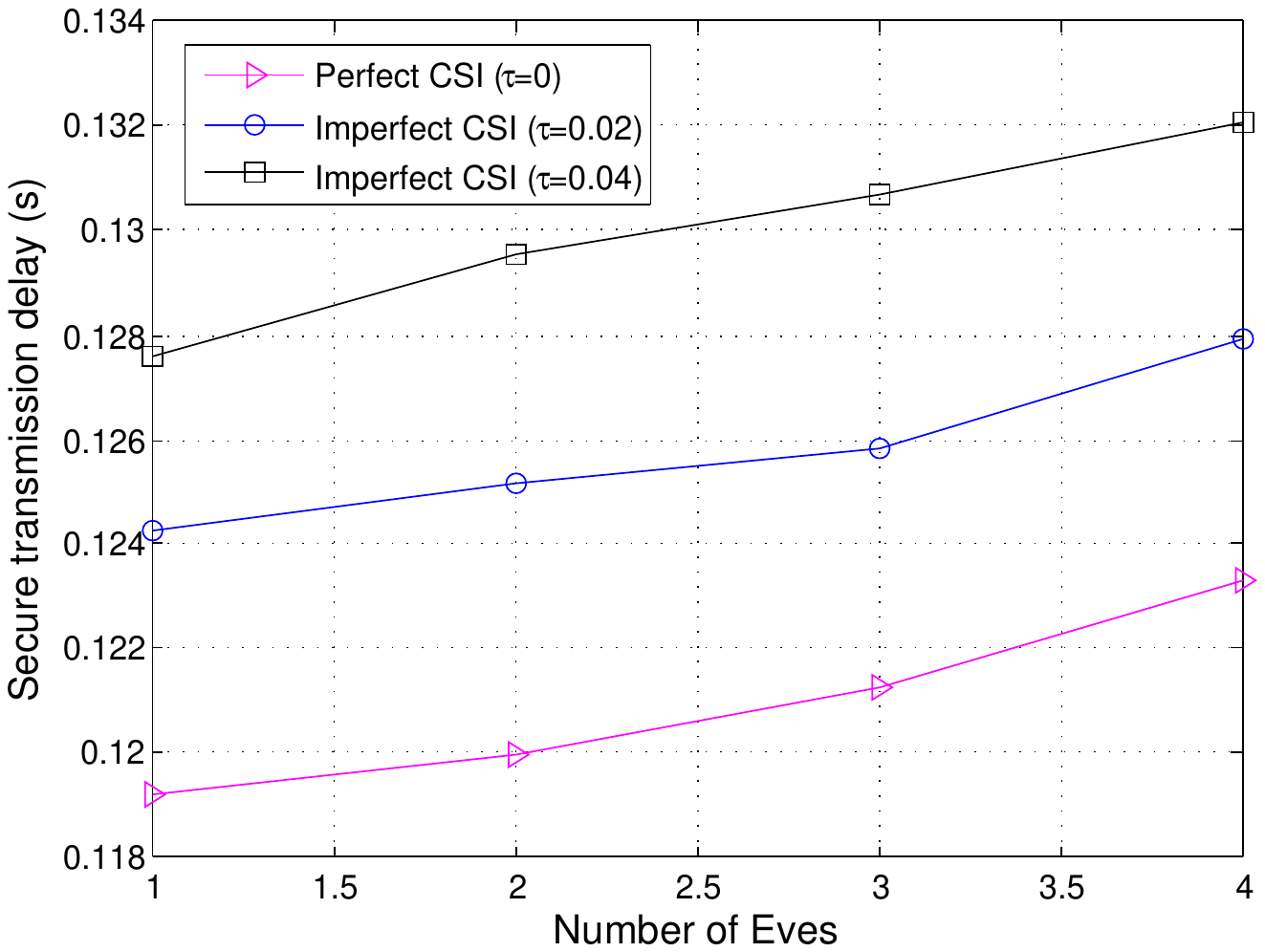}
		\caption{The secure transmission delay versus number of eavesdroppers with cache ratio=[0.2 0.2 0.2 0.2].}
		\label{Figuer6}
	\end{center}
\end{figure}
\begin{figure}[t]
	\begin{center}
		\includegraphics[width=8cm,height=6cm]{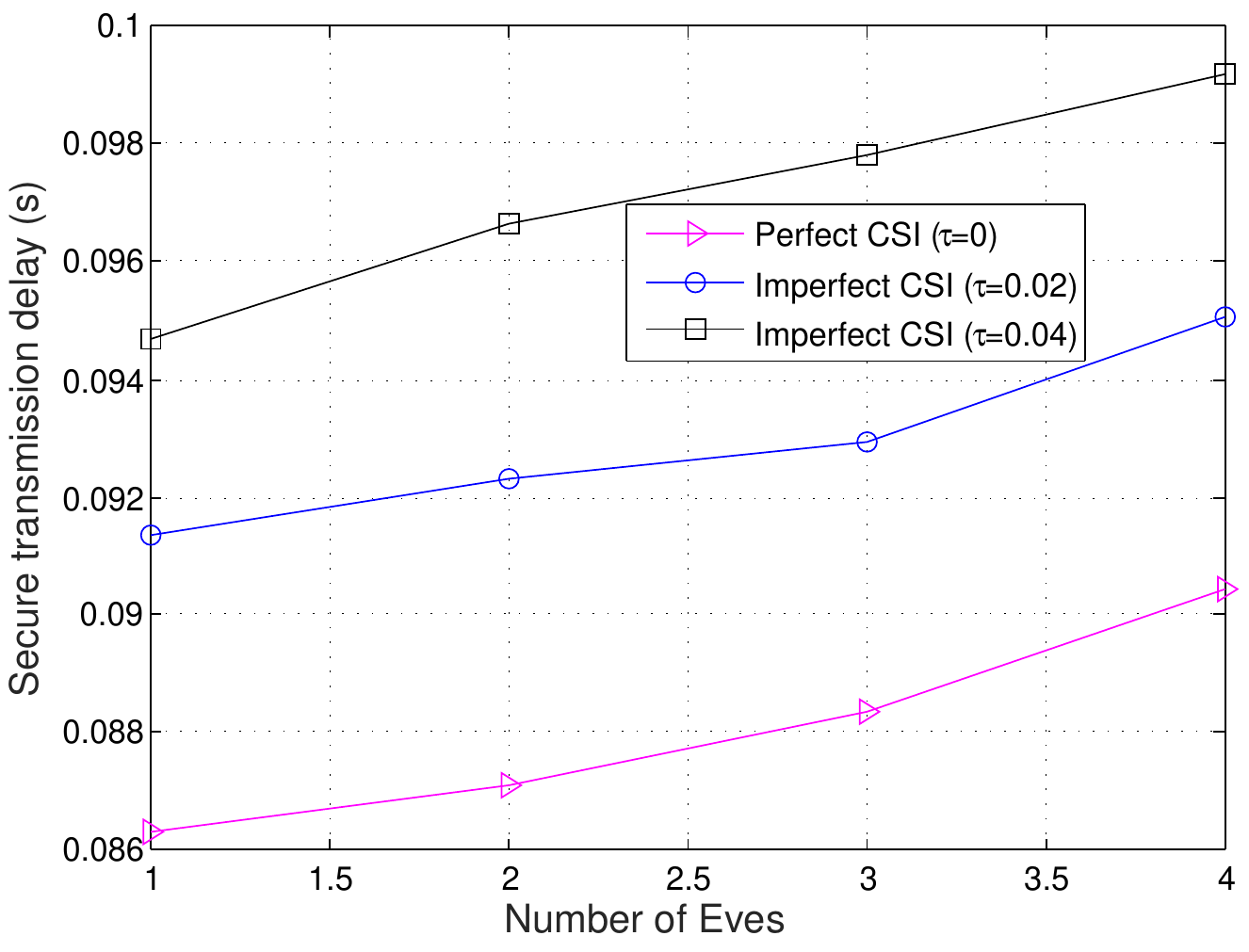}
		\caption{The secure transmission delay versus number of eavesdroppers with cache ratio=[0.5 0.5 0.5 0.5].}
		\label{Figuer7}
	\end{center}
\end{figure}

Figure~\ref{Figuer1} shows the convergence property of the proposed alternatively iterative algorithm for solving ${\mathbb{P}}$1 under different CP transmit powers. We set the file size requested by each actuator to $200$ Mbits, and half of each file is cached in the RRHs. One can observe that it takes only about 4 iterations for the proposed algorithm to converge.  We also plot the convergence speed of the proposed Algorithm~\ref{algorithm2} for solving ${\mathbb{P}}$2 in Fig.~\ref{Figuer2}.  We consider both the perfect CSI scenario when $\tau=0$, and two imperfect ones, with different error variance.  Note that Fig.~\ref{Figuer2} only plots the secure transmission delay from the RRHs to the actuators. It is clear that the secure transmission delay first decreases and then converges to a certain value in all scenarios. As expected, the secure transmission delay decreases with $\tau$, and achieves the lowest value under the perfect CSI scenario. 

Figure~\ref{Figuer3} plots the secure transmission delay versus cache ratio of each file at the RRHs. Here, we consider two different file requirements for the actuators: $\Omega=$[200 200 200 200] denotes the case that the file sizes requested by each actuator are $200$ Mbits, while  $\Omega=$[200 400 200 400] represents that when the file sizes requested by the four actuators are $200$ Mbits, $400$ Mbits, $200$ Mbits, $400$ Mbits, respectively. In~Fig.~\ref{Figuer3}, ``0" at the horizontal axis means that the requested files by the actuators are not cached in the RRHs at all, while ``1" means that the files requested by the actuators are all cached in the RRHs. One can see that the secure transmission delay decreases linearly with the cache ratio for all schemes. The reason is that a higher cache ratio reduces the data transmission  from the CP to RRHs, and consequently, lowers the fronthaul links delay.    

In~Fig.~\ref{Figuer4}, we show the secure transmission delay versus the maximum transmit power of each RRH under different cache ratios, where  $\Omega=$[200 200 200 200]. ``Cache ratio=[0.2 0.2 0.2 0.2]" means that 20\% of requested file by each actuator is cached in  the RRHs, while ``Cache ratio=[0.5 0.5 0.5 0.5]" means that 50\% of requested file by each actuator is cached in  the RRHs. One can see that the secure transmission delay decreases with the maximum transmit power of each RRH. In addition, a higher cache ratio leads to a smaller delay, which is also evidenced in~Fig.~\ref{Figuer3}. Fig.~\ref{Figuer5} shows the results when  $\Omega=$[200 400 200 400]. We can reach the same conclusions as in Fig.~\ref{Figuer4}, and the only difference is that the secure transmission delay is higher due to the larger files requested by actuators. 

We show the secure transmission delay versus the number of Eves in Fig.~\ref{Figuer6}. Here, we set cache ratio=[0.2 0.2 0.2 0.2] and the required file size  $\Omega=$[200 200 200 200]. It can be seen that the secure transmission delay increases with the number of Eves. This coincides with~(\ref{securerate}), which indicates that the presence of more Eves results in a lower secrecy transmission rate.  Additionally, a large CSI estimation error also leads to higher secure transmission delay. We also show results under cache ratio=[0.5 0.5 0.5 0.5] in Fig.~\ref{Figuer7}. One can observe that the secure transmission delay is much lower since more files requested can be directly fetched from the RRHs.

Note that for the specific values of simulation parameters, e.g., estimated channel error bound, each RRH's transmit power, etc., they are adopted as references, and may not be the same for the real network. Nonetheless, these specific values do not affect the evaluation of the proposed algorithm, and the trend still holds for other values. Once the parameter values are given, we can directly obtain the optimal beamforming and system performance based on the proposed algorithm.  For example, a larger channel estimation error leads to a higher transmission delay, and a larger RRH's transmit power brings a small transmission delay.
\section{Conclusion}
In this paper, we investigated a two-phase secure transmission delay minimization problem in an edge cache-assisted mmWave C-RAN. At the first transmission phase, a joint transmit beamforming at the CP and receive beamfroming at the RRHs scheme was proposed. At the second transmission phase, we first designed the analog beamforming for each RRH, and then transformed the formulated problem into a series of convex subproblems by SCA technique, $S$-procedure and SDP relaxation. Finally, an iterative algorithm was proposed, which converges to at least a local optimum. The presented simulation results show that the solutions have the rank-one characteristic with a high probability (near 99\%).   Meanwhile, the proposed algorithms have been shown to achieve fast convergence. Moreover, a detailed illustration has been provided to demonstrate how the secure transmission delay is affected by the different system parameters, e.g., the maximum transmit power at the RRHs, the cache-ratio, the channel estimation error, and the number of Eves. These results can be used as references during system design, where different tradeoffs need to be considered.
\bibliographystyle{ieeetr}
\bibliography{references}
\end{document}